\colorlet{MyBlue}{DodgerBlue!60!Black}
\colorlet{MyGreen}{DarkGreen!85!Black}
\newcommand{\afterhead}{.}
\def\EMAIL#1{\email{\href{mailto:#1}{\texttt{\upshape #1}}}}
\numberwithin{equation}{section}  
\crefname{app}{Appendix}{Appendices}
\crefname{assumption}{Assumption}{Assumptions}
\DeclarePairedDelimiter{\braces}{\{}{\}}
\DeclarePairedDelimiter{\parens}{(}{)}
\DeclarePairedDelimiter{\abs}{\lvert}{\rvert}
\DeclarePairedDelimiterX{\braket}[2]{\langle}{\rangle}{#1,#2}
\DeclarePairedDelimiterX{\inner}[2]{\langle}{\rangle}{#1,#2}
\DeclarePairedDelimiterX{\setdef}[2]{\{}{\}}{#1:#2}
\DeclarePairedDelimiterXPP{\probof}[1]{\prob}{(}{)}{}{%
#1}
\DeclarePairedDelimiterXPP{\exof}[1]{\ex}{[}{]}{}{%
#1}
\newcommand{\debug}[1]{#1}
\theoremstyle{plain}
\newtheorem{theorem}{Theorem}
\newtheorem*{corollary*}{Corollary}
\newtheorem{lemma}[theorem]{Lemma}
\theoremstyle{definition}
\newtheorem{definition}[theorem]{Definition}
\newtheorem*{definition*}{Definition}
\newtheorem*{assumption*}{Assumption}
\theoremstyle{remark}
\newtheorem*{remark*}{Remark}
\newtheorem*{notation*}{Notational remark}
\newtheorem{example}[theorem]{Example}
\newtheorem*{case*}{Case}
\DeclareMathOperator{\ex}{\debug{\mathbb{E}}}
\DeclareMathOperator{\expo}{\debug e}
\DeclareMathOperator{\prob}{\debug{\mathbb{P}}}
\DeclareMathOperator{\supp}{\debug{supp}}
\newcommand{\naturals}{\mathbb{\debug N}}
\newcommand{\reals}{\mathbb{\debug R}}
\newcommand{\simplex}{\debug \Delta}
\newcommand{\diff}{\ \textup{\debug d}}
\newcommand{\proba}{\debug p}
\newcommand{\state}{\debug \theta}
\newcommand{\statealt}{\state'}
\newcommand{\State}{\debug \Theta}
\newcommand{\states}{\mathcal{\debug T}}
\newcommand{\prior}{\debug \mu}
\newcommand{\distr}{\debug F}
\newcommand{\Bor}{\mathcal{\debug B}}
\newcommand{\dirac}{\debug \delta}
\newcommand{\per}{\debug t}
\newcommand{\randper}{\debug \tau}
\newcommand{\history}{\debug h}
\newcommand{\graph}{\mathcal{\debug G}}
\newcommand{\network}{\mathcal{\debug N}}
\newcommand{\vertex}{\debug v}
\newcommand{\vertexalt}{\debug u}
\newcommand{\vertices}{\mathcal{\debug V}}
\newcommand{\edge}{\debug e}
\newcommand{\edges}{\mathcal{\debug E}}
\newcommand{\route}{\debug r}
\newcommand{\routealt}{\route'}
\newcommand{\routes}{\mathcal{\debug R}}
\newcommand{\capac}{\debug \gamma}
\newcommand{\cut}{\mathcal{\debug C}}
\newcommand{\nodeA}{\debug a}
\newcommand{\nodeB}{\debug b}
\newcommand{\nodeC}{\debug c}
\newcommand{\flow}{\debug y}
\newcommand{\flowprof}{\boldsymbol{\flow}}
\newcommand{\flows}{\mathcal{\debug Y}}
\newcommand{\load}{\debug x}
\newcommand{\loadprof}{\boldsymbol{\load}}
\newcommand{\demand}{\debug d}
\newcommand{\Demand}{\debug D}
\newcommand{\source}{\mathsf{\debug O}}
\newcommand{\sink}{\mathsf{\debug D}}
\newcommand{\game}{\debug \Gamma}
\newcommand{\nagame}{\debug G}
\newcommand{\dgame}{\debug \Gamma}
\newcommand{\play}{\debug i}
\newcommand{\cost}{\debug c}
\newcommand{\costprof}{\boldsymbol{\cost}}
\newcommand{\eq}[1]{#1^{\ast}}
\newcommand{\incomp}[1]{\widetilde#1}
\newcommand{\compl}[1]{\widetilde#1}
\newcommand{\argdot}{\,\cdot\,}
\newcommand{\consta}{\debug a}
\newcommand{\consteps}{\debug \varepsilon}
\newcommand{\bigconst}{\debug A}
\newcommand{\good}{\mathsf{\debug G}}
\newcommand{\bad}{\mathsf{\debug B}}
\newcommand{\learnd}{\debug L}
\newcommand{\interval}{\mathcal{\debug I}}
\newcommand{\demandint}{\mathcal{\debug D}}
\newcommand{\convalpha}{\debug \alpha}
\newcommand{\noisy}[1]{\tilde#1}
\newcommand{\noedges}{\debug k}
\newcommand{\capkappa}{\debug \kappa}
\newacro{eNE}[$\varepsilon$-NE]{$\varepsilon$-Nash equilibrium}
\newacro{NE}{Nash equilibrium}
\newacro{NEs}{Nash equilibria}
\newacro{PNE}{pure Nash equilibrium}
\newacro{PFNE}{prior-free Nash equilibrium}
\newacro{WPFNE}{weakly prior-free Nash equilibrium}
\newacro{WE}{Wardrop equilibrium}
\newacro{SO}{socially optimum}
\newacro{NRG}{nonatomic routing game}
\newacro{RGUNS}{nonatomic routing game with unknown network state}
\newacro{DRGUNS}{dynamic nonatomic routing game with unknown network state}
\newacro{SP}{series-parallel}
\newacro{iid}[i.i.d.]{independent and identically distributed}
\title{Social Learning in Nonatomic Routing Games}
\author{Emilien Macault$^{\ddag}$}
\address{$^{\ddag}$ HEC Paris, 1 Rue de la Lib\'eration, 78350 Jouy-en-Josas, France}
\author{Marco Scarsini$^{\P}$}
\address{$^{\P}$ Dipartimento di Economia e Finanza, Luiss University, Viale Romania 32, 00197 Roma, Italy.}
\author{Tristan Tomala$^{\S}$}
\address{$^{\S}$ HEC Paris and GREGHEC, 1 rue de la Lib\'eration, 78351 Jouy-en-Josas, France; 
}
\thanks{Declarations of interest: none}
\begin{document}

\begin{abstract}

We consider a discrete-time nonatomic routing game with variable demand and uncertain costs. 
Given a routing network with single origin and destination, the cost function of each edge depends on some uncertain persistent state parameter. 
At every period, a random traffic demand is routed through the network according to a \acl{WE}. 
The realized costs are publicly observed and the public Bayesian belief about the state parameter is  updated. 
We say that there is \emph{strong learning} when beliefs converge to the truth and \emph{weak learning} when the equilibrium flow converges to the complete-information flow. 
We characterize the networks for which learning occurs. 
We prove that these networks have a series-parallel structure and provide a counterexample to show that learning may fail in non-series-parallel networks.

\smallskip
\noindent \textbf{Keywords:} routing games; incomplete information; social learning, series-parallel network; Wardrop equilibrium.

\end{abstract}
\maketitle
\pagebreak

\section{Introduction}

Nowadays navigation systems  provide real-time global information about congestion and the state of roads in the network to every driver, using users' data. Yet, even if information is publicly and completely broadcast (all drivers have perfect information about traffic), the amount of gathered information  might not be  socially efficient. This opens up the following question: `in a dynamic routing game with incomplete information, is it possible that social learning emerges from short-lived agents' equilibrium behavior?'
Routing games model the behavior of selfish agents who choose one path on a network from their origin to their destination, with the goal to minimize their cost, identified with the traveling time. 
This traveling time depends on the choice of all players, since the cost of an edge increases with the number of agents who use it. 
Routing games with many agents are often approximated by nonatomic games, which are more tractable and represent the limit case where each agent is negligible.
In most of the existing literature, costs are assumed to be known, but in reality they are affected by unpredictable circumstances. 
Uncertainty relative to these cost functions may strongly impact the equilibrium behavior, attracting agents away from potentially optimal actions. Thus the analysis of routing games of incomplete information is an important object of study.

If a game where  cost functions are not known is repeated over time and  beliefs of  players are updated taking into account observations of  previous players, will the costs functions be eventually learned? 
Two opposite effects naturally arise: on  one hand, agents aim at  minimizing the costs they incur immediately. On the other hand, socially efficient behavior requires agents to explore the routing network in order to learn the actual costs. 
We  consider generations of short-lived players who play the game only once and  are being replaced every period by a new set of players. 
Some amount of social learning may be achieved as players of one generation update their beliefs based on the behavior of the previous generations. 
Thus, selfish behavior may provide public information for the next generations of players. 
One challenge  is to analyze the amount of such public information provision. 
If the game parameters   are stationary, given that each generation has no incentive to be forward looking, potentially informative behavior may be off  equilibrium path and thus social learning may fail. 
Yet, when  there is  variability in the circumstances in which the game is played, current equilibrium behavior may provide useful information to the subsequent generations of players.

\subsection{Our contribution}
\label{suse:our}
We  consider a  repeated symmetric \acl{NRG} where each edge has a capacity and the cost functions of each edge depend on the load of the edge and on an unknown state parameter that is invariant over time.  
The set of  states  is finite and  endowed with a common prior.
At each period of time, a short-lived generation of users  with a given total demand plays the game and realizes a \acl{WE} with respect to the expected costs on edges: each path that receives positive load has the least expected cost.
For every used edge, its load and the corresponding realized cost become public information for the following generations. 
There is perfect recall, so each generation knows the entire past history of the game and updates its beliefs in a Bayesian way.
The sequence of different generations' demands is assumed to be random, \ac{iid}.

We consider two concepts of social learning:
under strong learning, players eventually learn the true state of the world;
under weak learning they learn to play the game as if the true state of the world were known. 
We show that weak learning is a strictly weaker concept than strong learning and that the conditions to achieve either of them depend on the topology of the network and on the support of the random demand. 
Our main theorem proves that weak learning occurs if the routing network is series-parallel, the cost functions are unbounded, and the demand can potentialy reach the network capacity. 
Further, we show that strong learning is achieved under the same prerequisites and the additional condition that the demand has full support. 
The intuition behind this result is the following: when the demand is stochastic, equilibrium flows vary. This generates observations of the cost functions for different values of loads. 
Based on results from \citet{ComDosSca:MPB2021} on the variation of equilibrium flows with respect to the demand, we prove that in a series-parallel network, as the demand tends to congest the network, all edges are used in equilibrium and equilibrium loads can get arbitrarily close to capacity. 
This implies that, with probability one, the cost functions will be observed at levels which allow distinguishing between the cost-relevant states.
An important subtlety in the learning result is that traveling an edge does not automatic reveal its cost function, since two different cost functions may be equal on an entire load interval. 
To distinguish the cost functions we need an equilibrium load where they are not the same.
Finally, we prove that the condition on the network topology is necessary: for  networks that do not satisfy it, it is possible to construct a game where even weak learning fails. 
The intuition is that a network which is not series-parallel contains a Wheatstone sub-network, and  the topology of Wheatstone network is such that some edges are not used in equilibrium when the demand is high.

\subsection{Literature review}

\citet{Ban:QJE1992,BicHirWel:JPE1992} considered models where agents enter a market sequentially, update their beliefs by taking into consideration their private signals and the actions chosen by the previous agents, and make their optimal decisions accordingly. 
They show that social learning may fail, that is, it is possible that in equilibrium all agents choose a suboptimal action.
\citet{SmiSor:E2000} showed that this is due to the hypothesis that the private signals are bounded, so, from some point on, no private signal can overcome the the observations' strength.
When signals are unbounded, social learning occurs.
A very general version of this model was recently studied by \citet{AriMue:MORfrth}.

In this paper we aim at lifting the concept of social learning to strategic models of traffic congestion. 
In the nonatomic setting that we adopt, the role of agents is taken by a sequence of traffic demands.
At every period a \acl{NRG} is played with a different demand generation.
Although an example of traffic congestion game can be found in \citet{Pig:Macmillan1920}, the standard solution concept for \aclp{NRG} is due to \citet{War:PICE1952}.
Some interesting properties of this concept were then studied by  \citet{BecMcGWin:Yale1956}.
Every \acl{NRG} is a congestion games, which in turn is potential games. 
Atomic congestion games were introduced by \citet{Ros:IJGT1973}, who proved that any congestion game admits a potential function, whose local minimizers are the pure Nash equilibria of the game.
\citet{MonSha:GEB1996} studied potential games and some of their generalizations and proved that for any potential game there exists a congestion game with the same potential function.
Congestion games with a continuum of players and their relation to potential games were studied by \citet{San:JET2001}.
The details of the asymptotic relation between atomic and nonatomic congestion games have been examined by \citet{ComScaSchSti:Converg:arXiv2019}.
\citet{ScaTom:IJGT2012} studied repeated versions of routing games.

Informational issues in routing games have been considered by several authors.
Some of them deal with atomic games. 
For instance, \citet{GaiMonTie:TCS2008} considered atomic routing games with incomplete information  where the type of a player is her traffic, which is private information. 
They proved existence of Bayesian pure equilibria and they studied their complexity.
They then studied properties of completely mixed Nash equilibria and finally they provided bounds for the price of anarchy.
\citet{Gai:AOA2009} studied atomic congestion games where each player can be of two types, rational or malicious. 
He proved that pure Bayesian equilibria may fail to exist and studied the price of malice.
\citet{AshMonTen:AI2009} studied symmetric congestion games where the number of active players is unknown and there is no known prior distribution over the number of active players.
\citet{BerSch:TCS2010} considered evolutionary stable strategies for Bayesian routing games with parallel links.
\citet{FotGkaKapSpi:TCS2012} studied how the inefficiency of equilibria in congestion games is affected by what they call social ignorance, that is, the lack of information about the presence of other players.
\citet{Syr:arXiv2012} and \citet{Rou:ACMTEC2015} proved that known bounds for the price of anarchy in smooth games extend to their incomplete information version when players' private preferences are drawn independently.
\citet{Rou:ACMTEC2015} showed that the extension does not hold for correlated preferences.
\citet{ComScaSchSti:StocPoA:arXiv2019} studied the behavior of the price of anarchy in atomic congestion games where each player $\play$ takes part in the game with some probability $\proba_{\play}$ and the participations are independent. 
\citet{GaiTar:EC2020,GaiTar:EC2021} examined discrete-time queueing models where routers compete for servers and learn using strategies that satisfy a no-regret condition. 
The key element of their model is the explicit consideration of carryover effect from one period to the other.

Other papers deal with nonatomic routing games.
Some of them deal with uncertainty in the traffic demand.
\citet{WanVinChe:TRB2014} examined nonatomic routing games with random demand and studied the dependence of the price of anarchy on the variability of the demand and on the network structure. 
\citet{CorHoeSch:TRB2019} studied a class of nonatomic routing games where different sets of players take part in the game with some probability; they considered the Bayes Nash equilibrium of these games and showed that the bounds for price of anarchy do not deteriorate with respect to the complete information version of the game.
\citet{BhaLigSchSwa:GEB2019} studied non-atomic routing games where cost functions are unknown. 
Their goal was to determine edge tolls in order to achieve specific equilibria. 
They showed that this can be done under mild conditions through the use of an oracle computing equilibrium flows given a set of tolls. 
In particular, they computed tight complexity bounds for series-parallel routing networks.

\citet{AceMakMalOzd:OR2018} dealt with nonatomic routing games where different types of agents have different information sets and each agent can only use paths in her own information set. 
They considered non-oriented routing networks and defined the concepts of information-constrained Wardrop equilibrium  and of informational Braess’ paradox, that is, a situation where, if agents get more information, they experience a higher cost in equilibrium. 
They showed that a necessary and sufficient condition for the paradox not to happen is to have   a network in the series of linearly independent class.
\citet{WuLiuAmi:ACC2017} studied a routing game where agents subscribe to one of two traffic information systems and characterized equilibria under two information structures whose difference is the assumption of the common prior in one and not in the other.
In the routing game studied by \citet{WuAmiOzd:OR2021} the population of users is divided into groups and each group subscribes to specific traffic information system. 
The cost functions on each edge are state dependent and each traffic information system sends a noisy signal only to its subscribers.
The solution concept that they adopted is the Bayesian Wardrop equilibrium.
They studied the sensitivity of the equilibria with respect to changes in the size of the groups.
The paper by \citet{WuAmi:IFAC2019} is the closest in spirit to our own. 
They analyzed nonatomic routing games with unknown costs where Bayesian public beliefs are  updated over time and only the used edges provide some information about the realized costs.
The difference between their model and ours is that they consider  constant demand and noisy costs with  Gaussian noise, possibly correlated across different edges. 
While close in flavor, the two different sets of assumptions produce different learning outcomes. 
In our model, costs are deterministic functions of a random state of the world which is fixed \emph{ex-ante}. This implies that learning revolves around sampling the cost functions at sufficiently many load levels in order to accurately distinguish the different possible states. This allow us to  find  conditions on distribution of demand and network topology that ensure learning. The differences between the two models are detailed in \cref{se:signals} where we provide instances of routing games where noisy costs cannot induce learning whereas a random demand does.

There is an important body of literature on various forms of opinion merging, partial learning, and the interaction between equilibrium and beliefs dynamics. 
\citet{KalLeh:JME1994} defined and characterized weak and strong merging of opinions. 
Closer to the problem examined here, \citet{KalLeh:E1993} defined rational learning for an infinitely repeated game where Bayesian players with heterogeneous beliefs maximize their expected utilities. 
They showed that, if agents' prior belief and the truth satisfy an ``absolute continuity'' condition, then the sequence of plays converges to an outcome arbitrarily close to a Nash equilibrium. 
Yet, in  general, even if agents perfectly observe  action profiles, errors in prior beliefs may persist over the course of the game.
\citet{FudLev:E1993} introduced the concept of self-confirming equilibrium, where players' beliefs about other players' actions are required to be correct only on the equilibrium path. As agents play best responses to their beliefs,  these may differ substantially from the truth, as long as nothing contradicts the beliefs. Hence, self-confirming equilibria and Nash equilibria of a game need not be the same. 
Given the sequential nature of the repeated game, self-confirming equilibria may allow players to hold beliefs on other agents that may be inconsistent with rational behavior.
\citet{BatGua:DGM1997} refined self-confirming equilibrium by proposing the concept of conjectural equilibrium for extensive form games of incomplete information without a common prior. At a conjectural equilibrium, players are assumed to behave rationally given the information they have about the game parameters. 
\citet{RubWol:GEB1994} pushed this idea further through the concept of rationalizable conjectural equilibrium, which requires that agents' rationality be common knowledge. 
In this paper, we adopt a similar point of view. 
Namely, we study the steady states of social learning dynamics where players myopically best-respond to the information obtained by previous generations. 
We show that this information is correct  for the edges of the network that have been explored along the equilibrium path,  but may remain incorrect for other edges.  
Thus, some paths that would be used under full information may remain unused forever.

\subsection{Outline of the paper}
\label{suse:outline}
\cref{se:routing-games-incomplete-info} introduces \aclp{RGUNS}. 
\cref{se:dynamic-games} deals with \aclp{DRGUNS}. 
\cref{se:learning} studies conditions for learning to occur. 
All the proofs can be found in \cref{se:proofs}. 
\cref{se:signals} provides a comparison with other instances of learning in routing games.

\section{Routing games with unknown network state}
\label{se:routing-games-incomplete-info}

We first describe the baseline model of nonatomic routing game. 
A \emph{network} is an oriented multigraph $\network=(\vertices,\edges)$, where $\vertices$ is the vertex set and  $\edges$ is the edge set, endowed with an origin/destination pair  $\source,\sink\in\vertices$.
Given $\vertex,\vertexalt\in\vertices$, a  \emph{path} from $\vertex$ to $\vertexalt$ is an ordered set of edges $\edge_{1},\dots\edge_{k}$ such that the tail of $\edge_{1}$ is $\vertex$, the head of $\edge_{k}$ is $\vertexalt$, and, for each $i\in\braces{1,\dots,k-1}$, the head of $\edge_{i}$ is the tail of $\edge_{i+1}$.
The set $\routes$ indicates the set of paths from $\source$ to $\sink$. 
To avoid trivialities, we assume that each edge is part of a path in $\routes$.
Each $\edge\in\edges$ is endowed with a capacity $\capac_{\edge} \in(0,+\infty]$ and with a  continuous strictly increasing function $\cost_{\edge} \colon [0,\capac_{\edge})\to\reals_{+}$   that represents the cost of using edge $\edge$ as a function of its load.
The traffic demand is denoted by $\demand$. 
The tuple $\nagame\coloneqq\parens*{\demand,\network,\braces{\capac_{\edge}}_{\edge\in\edges},\braces{\cost_{\edge}}_{\edge\in\edges}}$ defines a \emph{\acl{NRG}}.

A  \emph{cut} $\cut$ of the network is a subset of $\edges$ such that there does not exist a path from $\source$ to $\sink$ the uses only edges in $\edges\setminus\cut$.
The capacity of $\cut$ is the sum of the capacities of its edges $\capac_{\cut} \coloneqq \sum_{\edge\in\cut}\capac_{\edge}$. 
The \emph{capacity} $\capac$ of the network $\network$ is the   smallest capacity among all possible cuts; it  corresponds to the maximum traffic that can flow from origin to destination
\citep[see][]{ForFul:PUP1962}.
Throughout the paper, we assume $\demand \in [0,\capac)$, that is the demand satisfies the capacity constraints.
If $\capac=\infty$, then any positive demand can be satisfied.

For each path $\route\in\routes$, $\flow_{\route}\in \reals_{+}$ denotes the \emph{flow} over path $\route$.
For each edge $\edge\in\edges$, the \emph{load} $\load_{\edge}$ of $\edge$ is defined as
\begin{equation}
\label{eq:load}
\load_{\edge} \coloneqq \sum_{\route\ni\edge}\flow_{\route}.
\end{equation}
The symbols $\loadprof=\braces{\load_{\edge}}_{\edge\in\edges}$ and $\flowprof=\braces{\flow_{\route}}_{\route\in\routes}$ denote the load vector and the flow vector, respectively.
A flow vector $\flowprof$ is \emph{feasible} if it satisfies the demand and obeys the capacity constraints, i.e., 
\begin{align}
\label{eq:demand}
\sum_{\route\in\routes}\flow_{\route}&=\demand,\\
\load_{\edge} &< \capac_{\edge}, \quad\text{for all }\edge\in\edges.
\end{align}
The set of \emph{feasible flows} is denoted by $\flows$.
Notice that $\loadprof$ is uniquely determined by $\flowprof$, but not vice versa.

The cost of using edge $\edge$ is $\cost_{\edge}(\load_{\edge})$, with an abuse of notation  the cost of using path $\route$ is denoted by
\begin{equation}
\label{eq:cost-path}
\cost_{\route}(\flowprof) \coloneqq \sum_{\edge\in\route}\cost_{\edge}(\load_{\edge}).
\end{equation}
The cost vector $\parens{\cost_{\edge}(\load_{\edge})}_{\edge\in\edges}$ induced by the load vector $\loadprof$ is denoted by $\costprof(\loadprof)$.

This model encompasses the classical case where there is no limit of capacity: $\capac_\edge=+\infty$ for each $\edge$. 
It also covers M/M/1 queuing models where the cost (or waiting time) tends to infinity as the demand approaches the capacity. 
An alternative approach would be to extend the cost functions to the whole $\reals_{+}$ and allow values in $[0,+\infty]$; this would have no impact on our results.

\begin{definition}
\label{de:Wardrop-equilibrium}
A flow $\eq{\flowprof}\in\flows$ is a \acfi{WE}\acused{WE} of $\nagame$ if, for all $\route,\routealt\in\routes$ with $\eq{\flow}_{\route}>0$, we have
\begin{equation}
\label{eq:Wardrop}
\cost_{\route}(\eq{\flowprof}) \le \cost_{\routealt}(\eq{\flowprof}).
\end{equation}
\end{definition}

Since cost functions are strictly increasing, there exists a unique  load $\eq{\loadprof}$ associated to any equilibrium flow $\eq{\flowprof}$ (this unique load is the minimizer of a strictly convex potential function).

We now introduce uncertainty, represented by a finite probability space $\parens*{\states,2^{\states},\prior}$ called the \emph{state space}.
A \emph{\acl{RGUNS}} 
is a tuple $\nagame_{\prior}\coloneqq\parens*{\nagame,\parens*{\states,2^{\states},\prior}}$, where $\nagame$ is a \acl{NRG} as before and, for each $\edge\in\edges$ and  each $\state\in\states$, the cost function $\load\mapsto\cost_{\edge}(\load,\state)$  is continuous and strictly   increasing on $\load\in [0,\capac_{\edge})$.
To guarantee identifiability of the states, we  assume that for every pair $\state,\statealt\in\states$, there exists an edge $\edge\in\edges$ such that $\cost_{\edge}(\argdot,\state)\neq\cost_{\edge}(\argdot,\statealt)$.

Given a prior distribution $\prior\in\simplex(\states)$, with a common abuse of notation,  the expected costs are denoted by
\begin{equation}
\label{eq:cost-prior}
\cost_{\edge}(\load,\prior) \coloneqq \int_{\states} \cost_{\edge}(\load,\state) \diff\prior(\state)
\quad\text{and}\quad
\cost_{\route}(\flowprof,\prior) \coloneqq \int_{\states}\cost_{\route}(\flowprof,\state)\diff\prior(\state).
\end{equation}

\begin{definition}
\label{de:Bayes-Wardrop-equilibrium}
A flow $\eq{\flowprof}\in\flows$ is a \acfi{WE}\acused{WE} of $\nagame_{\prior}$ if, for all $\route,\routealt\in\routes$ with $\eq{\flow}_{\route}>0$, we have
\begin{equation}
\label{eq:Bayes-Wardrop}
\cost_{\route}(\eq{\flowprof},\prior) \le \cost_{\routealt}(\eq{\flowprof},\prior).
\end{equation}
In \cref{de:Bayes-Wardrop-equilibrium}  the Wardrop equilibrium of  the game is defined in terms of  the expected cost functions. Note that strict monotonicity of the  cost functions implies uniqueness of the equilibrium load.
The \acp{WE} load of the game $\nagame_{\prior}$ with demand $\demand$ is denoted by $\eq{\loadprof}(\demand,\prior)$.
\end{definition}

\section{Dynamic routing games with unknown state}
\label{se:dynamic-games}

We now consider a discrete-time model of social learning where a routing game with unknown network state is played over time and with a population that changes at every period.
The goal is to find conditions under which social learning is achieved, that is, each generation learns from the behavior of the previous generations and the public beliefs about the state of nature converge to the true value.

If the traffic demand were constant over time, then the same \ac{WE} would end up being played every period.
As a consequence, learning does not occur whenever some equilibrium paths of the complete-information games are not used along the sequence of equilibria.
Therefore, to achieve learning, we will assume that  demands are given by a sequence $\parens{\Demand^{\per}}_{\per\in\naturals}$ of \ac{iid} nonnegative random variables with common marginal distribution, denoted by $\distr$.
The symbol $\Demand$ denotes a generic element of the sequence and $\supp(\Demand)$ denotes its support. We  assume independence between the demands and the state of nature.
Therefore $\distr$ and $\prior$ induce a unique product measure $\prob$ on the measurable product space $\parens*{[0,\capac)^{\infty}\times\states,\Bor([0,\capac)^{\infty})\otimes 2^{\states}}$.

At every period $\per$, a demand $\Demand^{\per}$ is realized and observed, the \ac{WE} is played, the equilibrium load profile ${\eq{\loadprof}}^{\per}$ and the equilibrium costs $\costprof(\eq{\loadprof},\state)=\parens{\cost_{\edge}({\eq{\load}}^{\per},\state)}_{\edge\in\edges}$ are observed.
Therefore, for every $\per=1,2,\dots$, the history at period $\per$ is
\begin{equation}
\label{eq:history-t}
\history^{\per} \coloneqq \parens*{\Demand^{1},{\eq{\loadprof}}^{1}(\Demand^{1}),\costprof\parens*{{\eq{\loadprof}}^{1}(\Demand^{1}),\State},\dots,\Demand^{\per-1},{\eq{\loadprof}}^{\per-1}(\Demand^{\per-1}),\costprof\parens*{{\eq{\loadprof}}^{\per-1}(\Demand^{\per-1}),\State},\Demand^{\per}}
=\parens*{\incomp{\history}^{\per-1},\Demand^{\per}},
\end{equation}
where $\State$ is the random state.

The distribution $\prior$ on the state space is updated according to Bayes rule and $\prior^{\per}$ denotes the posterior distribution $\prior(\argdot\mid\history^{\per})$, that is,
\begin{equation}
\label{eq:posterior}
\prior^{\per}(\state) \coloneqq \prob\parens*{\State=\state\mid\history^{\per}}.
\end{equation}
The pair $\parens{\nagame,\prob}$ defines a \emph{\acl{DRGUNS}}, which will be denoted by $\dgame$.

The sequence of posterior beliefs is a bounded martingale. 
Thus, by the martingale convergence theorem, there exists a random variable  $\prior^{\infty}$ such that $\prior^{\per}\to\prior^{\infty}$ almost surely. 
Moreover, since the set of states is finite and costs are deterministic functions of states and loads, there exists a random time $\randper$ such that almost surely, $\prior^{\per}=\prior^{\infty},\forall\per\ge\randper$. 
Indeed in every period, either $\prior^{\per}=\prior^{\per+1}$ or the support of $\prior^{\per+1}$ is a strict subset of the support of $\prior^{\per}$.

As mentioned before, we will find conditions under which social learning is achieved.
We now define two concepts of learning.
To this end, the Dirac measure on $\state$ will be denoted by $\dirac_{\state}$.

\begin{definition}
\label{de:learning}
Consider a \acl{DRGUNS} $\dgame$. We say that:
\begin{enumerate}[(a)]
\item
\label{it:strong-learning}
\emph{strong learning} is achieved if 
\begin{equation}
\label{eq:strong-learning}
\prior^{\infty}=\dirac_{\State} \quad\prob\text{-a.s.}.
\end{equation}

\item
\label{it:weak-learning}
\emph{weak learning} is achieved if 
\begin{equation}
\label{eq:weak-learning}
\eq{\loadprof}(\argdot,\prior^{\infty}) = \eq{\loadprof}(\argdot,\dirac_{\State})
\quad\prob\text{-a.s.}.
\end{equation}
\end{enumerate}
\end{definition}

The idea of \cref{de:learning} is the following. 
If strong learning is achieved, asymptotically the true state of the world is discovered. 
Actually, given the fact that the state space $\states$ is finite, strong learning implies that there exists a random time $\randper$ such that $\prior^{\per}=\dirac_{\State}$ for all $\per \ge \randper$.
Under weak learning the true state is not necessarily discovered, but, asymptotically, in equilibrium, traffic is routed as if the state were known. 
This distinction follows \citet{KalLeh:E1993}, who studied the convergence of beliefs in repeated games and showed that players may learn to predict their opponents' actions even though they do not learn their payoff matrices. 
Similarly, in our model, players may not identify the true state of the world, yet they may learn to play optimally conditional on this state.
It is not difficult to see that strong learning implies weak learning, but the converse implication is false. 
We will show this with the following example.

\begin{figure}[h]
\centering
\begin{tikzpicture}[thick]
	\tikzset{SourceNode/.style = {draw, circle, fill=green!20}}
	\tikzset{MidNode/.style = {draw, circle}}
	\tikzset{DestNode/.style = {draw, circle, fill=green!20}}
		\node[SourceNode](s) at (-4,0) {$\source$};
    \node[MidNode](a) at (0,1.5) {$\nodeA$};
		\node[MidNode](b) at (0,-1.5){$\nodeB$};
		\node[DestNode](c) at (4,0){$\sink$};
    \draw[->](s)to node[fill = white]{$\edge_{1}$}(a);
		\draw[->](s)to node[fill = white]{$\edge_{2}$}(b);
		\draw[->](a)to node[fill = white]{$\edge_{4}$}(c);
		\draw[->](b)to node[fill = white]{$\edge_{5}$}(c);
		\draw[->](a)to node[fill = white]{$\edge_{3}$}(b);
\end{tikzpicture}
\caption{\label{fi:Wheatstone} Wheatstone network.}
\end{figure}
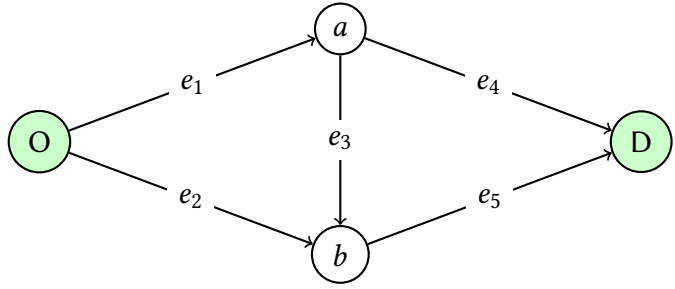

\begin{example}
\label{ex:weak-vs-strong}
Consider the network in \cref{fi:Wheatstone}  
 where each edge has infinite capacity and 
\begin{align}
\cost_{1}(\load,\state)=\cost_{5}(\load,\state)=\load,\quad
\cost_{2}(\load,\state)=\cost_{4}(\load,\state)=1+\consteps\load,\quad
\cost_{3}(\load,\state)=
\begin{cases}
\consteps\load & \text{for }\state=\state_{\good},\\
10+\consteps\load & \text{for }\state=\state_{\bad}
\end{cases}
\end{align}
with $\prior(\state_{\bad})=\prior(\state_{\good})=1/2$ and $\consteps<1$.  
Let the demand $\Demand^{\per}$ have a  distribution with  support $[20,\infty)$.

In this setup, for every value of the demand in the support, edge $\edge_{3}$ is not used. 
To see this, consider the paths
\begin{equation}
\label{eq:paths}
\route_{1}=(\edge_{1},\edge_{4}),\quad
\route_{2}=(\edge_{2},\edge_{5}),\quad
\route_{3}=(\edge_{1},\edge_{3},\edge_{5}),
\end{equation}
and let $\flow_{1}, \flow_{2}$ and $\flow_{3}$ be their respective equilibrium flows under $\prior$. 
Observe that for any value of $\Demand^{\per}$,  $\flow_{1}=\flow_{2}$ by symmetry. 
Then, given a realization $\demand$ of $\Demand^{1}$, there is a positive flow on $\route_3$ if and only if $\cost_{\route_3}(0,\prior)
\le \cost_{\route_1}(\demand/2,\state)$, that is, 
\begin{equation*}
\demand+5\le \frac{\demand}{2}(1+\consteps)+1, \quad\text{i.e.,}\quad \frac{\demand/2+4}{\demand/2} \le \consteps,
\end{equation*}
which is impossible. 
Thus, no demand is routed on $\route_3$, edge $\edge_3$ remains unexplored, and the state is not identified at time $1$.
At time $2$ the situations is the same, so, by induction, $\prior^{\per}=\prior$ for every $\per\in\naturals$.
On the other hand, observe that even under full information, $\edge_{3}$ would not be used given that $\Demand^{\per}\ge 20$. 
This shows that weak learning is trivially achieved, but strong learning is not.
\end{example}

\section{Learning in the Dynamic Game}
\label{se:learning}

Our convergence result  requires assumptions on the cost functions, on the sequence of demands, and on the structure of the network. 
As is shown later, these assumptions are necessary.

\begin{definition}
\label{de:series-parallel}
A network $\network$ is called \acfi{SP}\acused{SP} if it can be defined sequentially as follows:
\begin{enumerate}[(a)]
\item
Either $\network$ has a single edge.

\item
Or $\network$ consists of two \ac{SP} networks connected in series, by merging the destination of the first with the origin of the second.

\item
Or $\network$ consists of two \ac{SP} networks connected in parallel, by merging the origin of the first with the origin of the second and the destination of the first with the destination of the second.
\end{enumerate}
\end{definition}

We refer the reader to \citet{RioSha:JMPMIT1942,Duf:JMAA1965,HolLaw:MSS2003,Mil:GEB2006} for the definition and properties of \ac{SP} networks. 
\cref{fi:sp} provides examples of \ac{SP} and non-\ac{SP} networks.

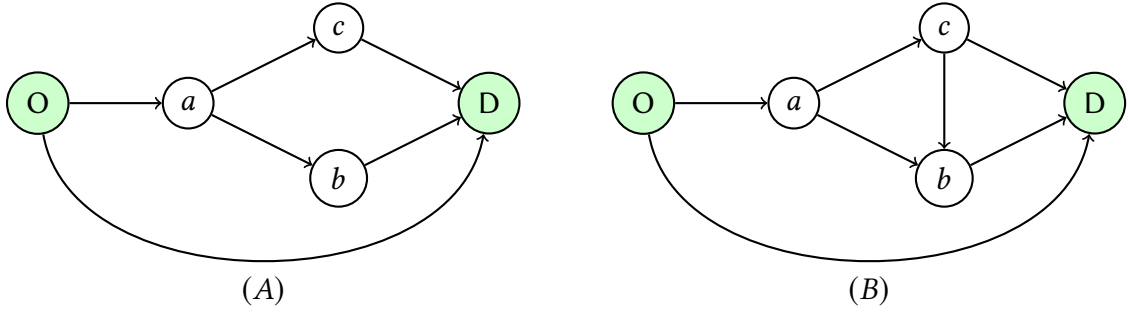
\begin{figure}
\begin{center}
\begin{tikzpicture}[thick]
	\tikzset{SourceNode/.style = {draw, circle, fill=green!20}}
	\tikzset{MidNode/.style = {draw, circle}}
	\tikzset{DestNode/.style = {draw, circle, fill=green!20}}
		\node[SourceNode](s) at (-4,0) {$\source$};
    \node[MidNode](a) at (-2,0) {$\nodeA$};
		\node[MidNode](c) at (0,-1){$\nodeB$};
		\node[MidNode](b) at (0,1){$\nodeC$};
		\node[DestNode](t) at (2,0){$\sink$};
    \draw[->](s)--(a);
		\draw[->](a)--(b);
		\draw[->](a)--(c);
		\draw[->](b)--(t);
		\draw[->, bend right = 80](s) to node{} (t);
		\draw[->](c)--(t);
		\node at (-1,-2.5) {$(A)$};
\end{tikzpicture}
\hspace{1cm}
\begin{tikzpicture}[thick]
	\tikzset{SourceNode/.style = {draw, circle, fill=green!20}}
	\tikzset{MidNode/.style = {draw, circle}}
	\tikzset{DestNode/.style = {draw, circle, fill=green!20}}
		\node[SourceNode](s) at (-4,0) {$\source$};
    \node[MidNode](a) at (-2,0) {$\nodeA$};
		\node[MidNode](c) at (0,-1){$\nodeB$};
		\node[MidNode](b) at (0,1){$\nodeC$};
		\node[DestNode](t) at (2,0){$\sink$};
    \draw[->](s)--(a);
		\draw[->](a)--(b);
		\draw[->](a)--(c);
		\draw[->](b)--(c);
		\draw[->](b)--(t);
		\draw[->, bend right = 80](s) to node{} (t);
		\draw[->](c)--(t);
		\node at (-1,-2.5) {$(B)$};
\end{tikzpicture}
\end{center}
\caption{\label{fi:sp} Network $(A)$ is \ac{SP}. Network $(B)$ is not \ac{SP}, due to the edge from node $\nodeC$ to node $\nodeB$.}
\end{figure}

The next  theorem provides conditions for weak and strong learning.

\begin{theorem}
\label{th:learning}
Let  $\dgame$ be a \acl{DRGUNS} such that, the network $\network$ is \ac{SP} and, for each edge $\edge\in\edges$ and each $\state\in\states$, we have $\lim_{\load \to \capac_{\edge}}\cost_{\edge}(\load,\state)=+\infty$.

\begin{enumerate}[\upshape(a)]

\item
\label{it:th:learning-strong}
If  $\supp(\Demand)=[0,\capac)$, then strong learning occurs.%
\footnote{By definition, the support of a random variable is a closed set. Here it is closed  relative to the space of feasible demands $[0,\capac)$.}

\item
\label{it:th:learning-weak}
If for every $\consteps>0$, there exists $\demand \in (\capac-\consteps,\capac)$ such that $\demand \in \supp(\Demand)$, then weak learning occurs.

\end{enumerate}
\end{theorem}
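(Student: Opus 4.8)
The plan is to reduce the statement to a property of the limiting belief $\prior^{\infty}$ and then to exploit the geometry of equilibrium loads on \ac{SP} networks. As recalled just before \cref{de:learning}, the posteriors $\prior^{\per}$ form a bounded martingale that stabilizes after a finite random time $\randper$, so $\prior^{\per}=\prior^{\infty}$ for every $\per\ge\randper$, and at each step the support of the posterior can only shrink. The engine of both parts is a structural lemma, which I would state separately and deduce from the monotonicity results of \citet{ComDosSca:MPB2021}: on an \ac{SP} network with costs that explode at capacity, for every fixed belief the map $\demand\mapsto\eq{\load}_{\edge}(\demand,\prior)$ is continuous and nondecreasing, and as $\demand\to\capac$ every edge is used while its load tends to $\capac_{\edge}$. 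In particular, as $\demand$ ranges over $[0,\capac)$ the load on each edge $\edge$ sweeps the whole interval $[0,\capac_{\edge})$.

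For part~(a) I would argue by contradiction. Suppose that with positive probability $\supp(\prior^{\infty})$ contains two distinct states $\state\neq\statealt$. By the identifiability assumption there is an edge $\edge$ with $\cost_{\edge}(\argdot,\state)\neq\cost_{\edge}(\argdot,\statealt)$, and by continuity these functions differ on a nonempty open interval $\interval\subseteq[0,\capac_{\edge})$. Using the sweeping property together with continuity of $\demand\mapsto\eq{\load}_{\edge}(\demand,\prior^{\infty})$, the set of demands at which $\edge$ is used with a load in $\interval$ is open and nonempty, hence of strictly positive probability because $\supp(\Demand)=[0,\capac)$. Since the demands are \ac{iid} and infinitely many periods follow $\randper$, this event occurs almost surely at some $\per\ge\randper$; the cost realized on $\edge$ then separates $\state$ from $\statealt$, the posterior strictly refines, and this contradicts $\prior^{\per}=\prior^{\infty}$. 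Hence $\supp(\prior^{\infty})$ is almost surely a singleton, and since the true state is never excluded from the support this yields $\prior^{\infty}=\dirac_{\State}$.

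For part~(b) the demand only needs to approach $\capac$, so I would not try to sample every load but instead show that the equilibrium actually played is the full-information one. Fix a demand $\demand$ in $\supp(\Demand)$ close to $\capac$: by the lemma all edges are used at $\demand$, and after stabilization the resulting observation leaves the belief unchanged, so every state in $\supp(\prior^{\infty})$---in particular the realized $\State$---induces the same cost as $\prior^{\infty}$ at each realized load $\eq{\load}_{\edge}(\demand,\prior^{\infty})$. Consequently $\cost_{\edge}(\eq{\load}_{\edge},\prior^{\infty})=\cost_{\edge}(\eq{\load}_{\edge},\State)$ on every edge, so the inequalities \eqref{eq:Bayes-Wardrop} characterizing $\eq{\flowprof}(\demand,\prior^{\infty})$ under $\prior^{\infty}$ coincide with those under $\dirac_{\State}$; uniqueness of the equilibrium load then gives $\eq{\loadprof}(\demand,\prior^{\infty})=\eq{\loadprof}(\demand,\dirac_{\State})$ for these demands, which is exactly weak learning.

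The crux of the whole argument is the structural lemma: establishing that on \ac{SP} networks equilibrium loads are monotone in the demand and that every edge saturates as $\demand\to\capac$. This is where the series-parallel hypothesis is indispensable and where I would rely on \citet{ComDosSca:MPB2021}, proving the sweeping property by induction on the series/parallel decomposition and checking that it is preserved under both series and parallel composition. The secondary difficulty, already signaled in the introduction, is that traversing an edge need not reveal its cost, because two cost functions may coincide on an entire subinterval; the proof must therefore exhibit a load at which the functions genuinely differ \emph{and} argue that this load is attained in equilibrium, which is precisely what sweeping up to capacity provides.
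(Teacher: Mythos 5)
Your preliminaries and part (a) are essentially the paper's own proof: posterior stabilization after a finite random time, identifiability yielding an edge whose two cost functions differ on an open load interval, the monotonicity of \cref{le:increasing-WE-load} plus the sweeping property (the paper's \cref{le:unbounded-loads}) to obtain a nonempty open---hence positive-probability---set of demands at which that edge carries an informative load, and i.i.d.\ demands to force an update after stabilization, a contradiction. This part is fine. (Your claim that every edge's load tends to $\capac_{\edge}$ shares the paper's level of informality: in a series composition an edge whose capacity exceeds the bottleneck never saturates; but this affects \cref{le:unbounded-loads} equally, so I do not count it against you.)

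Part (b), however, has a genuine gap. Weak learning in \cref{de:learning} is equality of the load \emph{maps}, $\eq{\loadprof}(\argdot,\prior^{\infty})=\eq{\loadprof}(\argdot,\dirac_{\State})$; since every positive-probability portion of $\supp(\Demand)$ is realized infinitely often, this must hold at essentially every supported demand, not only at demands close to $\capac$. Your argument establishes the equality only at near-capacity demands, where all edges are used. It says nothing about a low demand $\demand_{0}\in\supp(\Demand)$: there the $\prior^{\infty}$-equilibrium may leave some edge $\edge$ unused, and the cost agreement you extract from stabilization holds only at the \emph{high} loads $\eq{\load}_{\edge}(\demand,\prior^{\infty})$ observed for $\demand$ near $\capac$---it gives no control on $\cost_{\edge}(\argdot,\state)$ at small loads. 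If $\cost_{\edge}(\argdot,\State)$ lies below $\cost_{\edge}(\argdot,\prior^{\infty})$ at small loads while agreeing at large ones, the full-information equilibrium at $\demand_{0}$ routes flow on $\edge$ whereas the $\prior^{\infty}$-equilibrium does not, so the two load vectors differ at a supported demand; this is precisely the mechanism of \cref{ex:bounded-demand,ex:no-SP}, and nothing in your near-capacity analysis excludes it. Hence your concluding phrase ``which is exactly weak learning'' does not follow from what you proved. Note that the paper's proof diverges from yours at exactly this point: it quantifies over \emph{all} supported demands (any supported demand whose realization would move the posterior contradicts stabilization), deduces that at every supported demand no used edge has a cost still ambiguous under $\prior^{\infty}$, and only then identifies the realized flow split with the full-information one. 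Whatever one thinks of that final step, it at least engages the low-demand case; your proof never does, and that case is where the entire difficulty of part (b) lies.
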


The following theorem shows that the assumption that the network is \ac{SP} is necessary in the sense that, if it does not hold, it is possible to construct an assignment of cost functions and capacities that satisfies the other hypotheses of \cref{th:learning} and for which weak learning fails for any distribution of the demand.

\begin{theorem}
\label{th:converse}

If the network $\network$ is not \ac{SP}, then there exist capacities and uncertain unbounded cost functions such that weak learning fails for every distribution of the demand.

\end{theorem}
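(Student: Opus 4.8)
The plan is to reduce the arbitrary non-\ac{SP} network to the Wheatstone bridge of \cref{fi:Wheatstone} by a forbidden-substructure argument, and then to reproduce a Braess-type construction on the embedded bridge so that the only state-distinguishing edge is never used under the prior at \emph{any} feasible demand. First I would invoke the topological characterization of two-terminal \ac{SP} networks \citep{RioSha:JMPMIT1942,Duf:JMAA1965,HolLaw:MSS2003,Mil:GEB2006}: since every edge of $\network$ lies on an $\source$--$\sink$ path, if $\network$ is not \ac{SP} then it contains a subdivision of the directed Wheatstone bridge. Concretely there are four distinct vertices $\vertex_0,\vertex_1,\vertex_2,\vertex_3$ and five internally vertex-disjoint directed paths $P_1\colon\vertex_0\rightsquigarrow\vertex_1$, $P_2\colon\vertex_0\rightsquigarrow\vertex_2$, $P_3\colon\vertex_1\rightsquigarrow\vertex_2$, $P_4\colon\vertex_1\rightsquigarrow\vertex_3$, $P_5\colon\vertex_2\rightsquigarrow\vertex_3$, together with connector paths $Q_0\colon\source\rightsquigarrow\vertex_0$ and $Q_1\colon\vertex_3\rightsquigarrow\sink$. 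The three $\vertex_0$--$\vertex_3$ routes through this gadget are $P_1P_4$, $P_2P_5$ and $P_1P_3P_5$, mirroring $\route_1,\route_2,\route_3$ of \cref{ex:weak-vs-strong}, with $P_3$ in the role of the diagonal $\edge_3$.

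Next I would place all the uncertainty on $P_3$ and arrange the congestion on the two direct routes to cancel. Take two states with uniform prior, assign a fixed strictly increasing cost $\cost$ of finite capacity $\capac_{0}$ (with $\cost\to\infty$ at $\capac_{0}$) to every edge of $P_1$ and $P_5$, the cost $\cost+K$ for a constant $K>0$ to every edge of $P_2$ and $P_4$, and on $P_3$ the state-dependent family with $\cost_{P_3}(0,\state_\good)=0$ and $\cost_{P_3}(0,\state_\bad)$ large enough that $\cost_{P_3}(0,\prior)>K$; on a path with $m$ edges I simply distribute the intended cost as $1/m$ of it on each edge, keeping each piece strictly increasing and unbounded. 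When the diagonal is idle the demand splits symmetrically, so each of $P_1,P_2,P_4,P_5$ carries $\demand/2$, a direct route costs $2\cost(\demand/2)+K$ and the diagonal route costs $2\cost(\demand/2)+\cost_{P_3}(0,\cdot)$. The $\cost(\demand/2)$ terms cancel, so $P_3$ is used exactly when $\cost_{P_3}(0,\cdot)<K$, a threshold \emph{independent of} $\demand$. Hence under $\dirac_{\state_\good}$ the diagonal carries positive flow for every $\demand\in(0,\capac)$, while under $\prior$ it is never used.

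These two facts force weak learning to fail uniformly in the demand law. Since $P_3$ is the only carrier of state-dependent cost and is never used under $\prior$ at any feasible demand, no realized history ever contradicts $\prior$: $\prior^{\per}=\prior$ for all $\per$, hence $\prior^{\infty}=\prior$, for \emph{every} distribution $\distr$. On the event $\{\State=\state_\good\}$, of probability $\prior(\state_\good)>0$, the load function $\eq{\loadprof}(\argdot,\dirac_{\state_\good})$ puts positive load on $P_3$ whereas $\eq{\loadprof}(\argdot,\prior^{\infty})=\eq{\loadprof}(\argdot,\prior)$ puts zero load on it, so the two load functions differ and weak learning fails. Crucially this argument never refers to $\distr$, which is precisely what delivers failure for every distribution of the demand.

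The remaining step, which I expect to be the genuine obstacle, is to make the equilibrium of the whole network agree with that of the gadget for \emph{all} feasible demands. I would give $Q_0,Q_1$ neutral strictly increasing costs (being common to the three routes they add only a constant and affect no comparison) and assign prohibitive costs to every edge outside $Q_0\cup Q_1\cup\bigcup_{i}P_{i}$, so the equilibrium confines all flow to the gadget. The delicate point is uniformity in $\demand$: the gadget costs blow up as $\demand$ approaches the gadget's capacity, so a merely large constant penalty on outside edges can be undercut near capacity, and outside edges may also enlarge the true network capacity. I would control this by choosing the capacities so that a minimum cut of $\network$ lies inside the gadget (keeping the feasible interval equal to the gadget's), by giving outside edges small capacities with costs exceeding every gadget-route cost on all of $[0,\capac)$, and, if robustness near capacity is still needed, by additionally making $\cost_{P_3}(\argdot,\state_\good)$ and $\cost_{P_3}(\argdot,\state_\bad)$ coincide at large loads, so that even an inadvertent high-demand use of $P_3$ would reveal nothing about the state. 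Checking that these choices simultaneously preserve the constant-threshold cancellation on the direct routes, the non-use of $P_3$ under $\prior$, and the separation of the load functions under $\dirac_{\state_\good}$ is the crux of the proof.
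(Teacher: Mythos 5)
Your skeleton matches the paper's: reduce the non-\ac{SP} network to an embedded Wheatstone structure (the paper does this with \cref{le:Chen}, the $\source$-$\sink$ paradox of Chen, Diao and Hu, which is exactly your subdivision statement with $\vertex_0=\nodeA$, $\vertex_1=\vertexalt$, $\vertex_2=\vertex$, $\vertex_3=\nodeB$), then rig costs so that the state-carrying diagonal is never used under the prior at any feasible demand, while the complete-information equilibrium would use it. Your gadget, however, is genuinely different from the paper's, and the difference matters. Your exact-cancellation design --- identical congestion function $\cost$ on $P_1,P_5$ and $\cost+K$ on $P_2,P_4$, so that the diagonal is used if and only if $\cost_{P_3}(0,\cdot)<K$, a threshold independent of $\demand$ --- is precisely what ``for every distribution of the demand'' requires: it separates $\eq{\loadprof}(\demand,\prior)$ from $\eq{\loadprof}(\demand,\dirac_{\state_{\good}})$ at \emph{every} positive demand while keeping the diagonal unexplored under $\prior$. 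The paper instead makes the two states' costs on the unknown edge coincide on $[0,1]$ and proves the equilibrium load on that edge stays below $1$ under the prior; this secures non-updating of beliefs, but it makes the final separation step delicate, since at loads below $1$ every edge cost under $\prior$ coincides with its cost under $\state_{\good}$, so the prior equilibrium satisfies the complete-information Wardrop conditions as well, and the asserted difference between the two flows needs further justification. Your design avoids that difficulty, which is a real merit of the proposal.

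The genuine gap is the one you flag yourself: the embedding step is not carried out, and with your specific choices the sketched fixes would fail. Giving outside edges ``costs exceeding every gadget-route cost on all of $[0,\capac)$'' is unachievable, because gadget-route costs are unbounded on that interval while any cost function takes a finite value at small loads. More importantly, your blow-up-at-$\capac_{0}$ costs on the gadget destroy the cancellation under perturbation: flows entering or leaving the gadget at internal vertices shift loads on $P_1,P_2,P_4,P_5$ by up to some $\capkappa$, and near capacity such a shift changes $\cost$ by an arbitrarily large amount, so the equality of the two direct routes' costs no longer forces their flows to be within $\bigoh(\capkappa)$ of each other and the threshold comparison against $K$ can flip. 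The paper closes exactly this hole with a device you should adopt: every edge outside the figure gets capacity $\capkappa/\abs{\routes}$ with $\capkappa<1/2$ (\cref{eq:kappa}) and cost $1/(\capac_{\edge}-\load_{\edge})$, which caps the total extraneous load injected into any gadget edge by $\capkappa$, and the Wardrop comparisons are then run with explicit $\capkappa$ slack (\cref{eq:expected_costs} and the chain of inequalities following it). To import this into your construction, replace the blow-up costs by affine ones with infinite capacity (e.g.\ $\cost(\load)=\load$ on $P_1,P_5$ and $\load+K$ on $P_2,P_4$): then any outside perturbation moves each route-cost comparison by at most $2\capkappa$, and choosing $\capkappa$ small relative to the two gaps $\cost_{P_3}(0,\prior)-K$ and $K-\cost_{P_3}(\capkappa,\state_{\good})-2\capkappa$ makes all three of your conclusions (non-use of $P_3$ under $\prior$, use of $P_3$ under $\dirac_{\state_{\good}}$ at every demand, hence constant beliefs and distinct load maps) robust to the rest of the network; the costs remain unbounded as the theorem requires, the network capacity becomes infinite, and the near-capacity regime you were worried about disappears entirely.
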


\subsection{Learning failure}
\label{suse:learning-failure}

Learning may fail if any of the assumptions of \cref{th:learning} does not hold. 
We present below a series of examples that show why learning may fail in those cases.

\begin{figure}[h]
\centering
\begin{tikzpicture}[node distance   = 6 cm,thick]
  \tikzset{SourceNode/.style = {draw, circle, fill=green!20}}
  \tikzset{DestNode/.style = {draw, circle, fill=green!20}}
     \node[SourceNode](s){$\source$};
     \node[DestNode,right=of s](t){$\sink$};
     \draw[->,bend left = 45](s) to node[fill = white]{$\cost_{1}(\load,\state)$} (t);
     \draw[->,bend right = 45](s) to node[fill = white]{$\cost_{2}(\load,\state)$} (t);
\end{tikzpicture}
\caption{\label{fi:parallel-network} Parallel edge network.}
\end{figure}
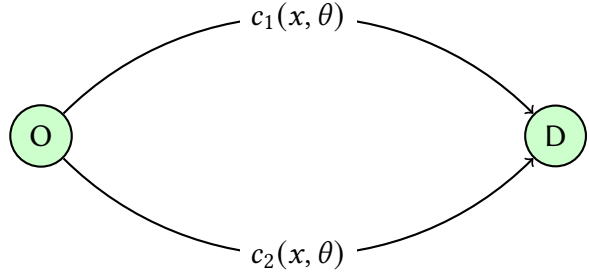

\begin{example}[Bounded costs]
\label{ex:bounded-costs}
Consider the network in \cref{fi:parallel-network} with infinite capacity on each edge and 
\begin{equation}
\label{eq:costs-Pigou-2}
\cost_{1}(\load,\state)=1-\expo^{-\load},\qquad
\cost_{2}(\load,\state)=
\begin{cases}
\load &\text{for }\state=\state_{\good},\\
\load+10 &\text{for }\state=\state_{\bad},
\end{cases}
\end{equation}
with $\prior(\state_{\bad})=\prior(\state_{\good})=1/2$.

In the unique equilibrium of the game all the demand uses edge $\edge_{1}$ at any period $\per$, for any possible value of $\Demand^{\per}$. 
This is due to the fact that the cost $\cost_1(\cdot,\state)$ is bounded above by $1$ and $\cost_2(\load,\prior)=\load+5$.
The lower path is dominated in expectation for every possible value of $\Demand^{\per}$; hence, no positive mass ever uses it at any equilibrium and the public belief remains equal to the prior.
As a consequence, weak learning does not occur. 
\end{example}

\begin{example}[Bounded demand]
\label{ex:bounded-demand}

Consider the network in \cref{fi:parallel-network} with infinite capacity on each edge and 
\begin{equation}
\label{eq:costs-Pigou}
\cost_{1}(\load,\state)=\load,\qquad
\cost_{2}(\load,\state)=
\begin{cases}
\load &\text{for }\state=\state_{\good},\\
\load + \consta &\text{for }\state=\state_{\bad},
\end{cases}
\end{equation}
with $\consta>0$ and  $\prior(\state_{\bad})=\prior(\state_{\good})=1/2$.

The expected cost of edge $\edge_{2}$ is 
\begin{equation}
\label{eq:expected-cost-e2}
\cost_{2}(\load,\prior) = \load + \frac{1}{2}\consta.
\end{equation}
Therefore, if $\Demand^{\per} < \consta/2$ with probability $1$, then edge $\edge_{2}$ is never used and even weak learning fails. 
In this example, learning fails because the lower path is dominated in expectation for low values of $\Demand^{\per}$. 
Under complete information, edge $\edge_{2}$ is used in equilibrium when the state is $\state_{\good}$.
When states are unknown, the presence of a fixed cost $\consta$ in state $\state_{\bad}$ deters the use of edge $\edge_{2}$ for low values of the demand.
Hence in equilibrium  the public belief remains equal to the prior.
\end{example}

\begin{example}[Non-SP network]
\label{ex:no-SP}
Consider the costs and network of \cref{ex:weak-vs-strong} with infinite capacity on each edge. 
The expected cost of edge $\edge_{3}$ is 
\begin{equation}
\label{eq:expected-cost-e3}
\cost_{3}(\load,\prior) = \consteps\load + 5.
\end{equation}
If the demand $\Demand^{\per}$ has an exponential distribution with parameter $1$, then edge $\edge_{3}$ is never used, no matter  the realization of $\Demand^{\per}$. Yet, it would be used for small values of the demand under $\state_{\good}$. This shows that  weak learning fails.

Since the network of this example is not series-parallel, we were able to find costs such that  edge $\edge_{3}$ is used only under complete knowledge of   state  $\state_{\good}$ and for  low values of the demand. Due to  the fixed cost in state $\state_{\bad}$, no demand will use this edge,  hence no learning occurs.
\end{example}

\section{Proofs}\label{se:proofs}

In this section we provide the proofs of our main results.

\begin{proof}[Proof of \cref{th:learning}]
Given a prior $\prior^{\per}$, we define $\learnd(\prior^{\per})$ the set of possible demands $\demand$ such that, under the equilibrium load profile $\loadprof^{*\per}$, we have
\begin{equation}
\label{eq:t-to-t+1-learn}
\prior^{\per} \neq \prior^{\per+1}
\end{equation}
if $\Demand^{\per}=\demand$.
Notice that  if  $\prior^{\per}\neq\prior^{\infty}$, the set $\learnd(\prior^{\per})$ is  nonempty.

Whenever $\prior^{\per} \neq \dirac_{\State}$, there exist $\state_{1},\state_{2}$ such that
\begin{equation}
\label{eq:mu-theta-1-theta-2}
0 < \prior^{\per}(\state_{1}) < 1 \quad\text{and}\quad
0 < \prior^{\per}(\state_{2}) < 1.
\end{equation}
\cref{eq:mu-theta-1-theta-2} implies that there exists an edge $\edge\in\edges$ such that, for the above states $\state_{1},\state_{2}$,
\begin{equation}
\label{eq:cost-theta-1-cost-theta-2}
\cost_{\edge}(\argdot,\state_{1}) \neq \cost_{\edge}(\argdot,\state_{2}).
\end{equation}
Let $\bar{\load}_{\edge}$ be such that 
\begin{equation}
\label{eq:xbar-cost-theta-1-cost-theta-2}
\cost_{\edge}(\bar{\load}_{\edge},\state_{1}) \neq \cost_{\edge}(\bar{\load}_{\edge},\state_{2}).
\end{equation} 
We want to prove that there exists a value $\demand$ of the demand for which the equilibrium load on edge $\edge$ is $\bar{\load}_{\edge}$.
To do this we use the following lemmata:

\begin{lemma}[\protect{\citet[Proposition~3.12]{ComDosSca:MPB2021}}]
\label{le:increasing-WE-load}
Let $\network$ be a finite \ac{SP} network. 
Then there exists an equilibrium load profile function $\demand\mapsto\eq{\loadprof}(\demand)$ whose components $\eq{\load}_{\edge}(\demand)$ are nondecreasing functions of the demand $\demand$. 
\end{lemma}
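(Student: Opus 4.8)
The plan is to argue by induction on the series-parallel decomposition of $\network$ given in \cref{de:series-parallel}, proving simultaneously a slightly stronger statement: for every \ac{SP} network, the equilibrium load on each edge is nondecreasing in the demand, \emph{and} the associated equilibrium cost function $\demand\mapsto\Lambda_{\network}(\demand)$—the common cost incurred on every used path at the \ac{WE} with demand $\demand$—is continuous and strictly increasing on $[0,\capac)$. Carrying both properties through the induction is what makes the parallel step go through; uniqueness of the equilibrium load (from strict convexity of the potential, noted after \cref{de:Wardrop-equilibrium}) guarantees these objects are single-valued.

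For the base case of a single edge $\edge$, the whole demand traverses $\edge$, so $\eq{\load}_{\edge}(\demand)=\demand$ is nondecreasing and $\Lambda_{\network}(\demand)=\cost_{\edge}(\demand)$ is continuous and strictly increasing by hypothesis. For a series composition $\network=\network_{1}\oplus\network_{2}$, the full demand passes through each factor, so the load on any edge equals its equilibrium load inside the corresponding factor at demand $\demand$; by the inductive hypothesis these are nondecreasing, and $\Lambda_{\network}=\Lambda_{\network_{1}}+\Lambda_{\network_{2}}$ is again continuous and strictly increasing.

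The crux is the parallel composition $\network=\network_{1}\,\|\,\network_{2}$, where the demand splits as $\demand=\demand_{1}+\demand_{2}$ with $\demand_{i}$ routed through $\network_{i}$. I would characterize the split through the common equilibrium cost: branch $\network_{i}$ carries positive flow precisely when the equilibrium cost exceeds its free-flow cost $\Lambda_{\network_{i}}(0)$, in which case $\Lambda_{\network_{i}}(\demand_{i})$ equals that common cost. Using the strictly increasing, continuous inverses $\Lambda_{\network_{i}}^{-1}$ supplied by the inductive hypothesis (extended by $0$ below $\Lambda_{\network_{i}}(0)$), the total demand as a function of the common cost $\lambda$ is $\demand=\Lambda_{\network_{1}}^{-1}(\lambda)+\Lambda_{\network_{2}}^{-1}(\lambda)$, which is continuous and strictly increasing; inverting it shows that $\Lambda_{\network}$ is continuous and strictly increasing and that the common cost is nondecreasing in $\demand$. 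Consequently each $\demand_{i}=\Lambda_{\network_{i}}^{-1}(\Lambda_{\network}(\demand))$ is nondecreasing in $\demand$, and the inductive hypothesis applied to $\network_{i}$ then yields that every edge load in $\network_{i}$ is nondecreasing in $\demand_{i}$, hence in $\demand$.

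The main obstacle I anticipate is precisely this parallel step, and within it the careful treatment of branches that are unused for small demand and become used as demand grows: I must verify that the map $\demand\mapsto\demand_{i}$ is well defined, single-valued, and remains nondecreasing and continuous across the activation threshold where a branch switches on, which is exactly where strict monotonicity of the cost functions and uniqueness of the equilibrium load are used. It is worth emphasizing that the \ac{SP} hypothesis cannot be dropped: in a Wheatstone network, increasing the demand can strictly \emph{decrease} the load on an edge, so no such global monotone parametrization exists, consistent with the role this lemma plays in the surrounding argument.
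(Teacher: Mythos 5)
You should first be aware that the paper contains no proof of this lemma: it is imported as a black box, cited as Proposition~3.12 of \citet{ComDosSca:MPB2021}, and the paper's ``proof'' is literally that citation. So the comparison is between your self-contained argument and the cited literature. Your induction on the decomposition of \cref{de:series-parallel}, carrying along the effective equilibrium-cost function $\Lambda_{\network}(\demand)$ of each subnetwork, is sound and is essentially the standard technique by which this monotonicity result is established for \ac{SP} networks: the series step works because the restriction of a \ac{WE} of $\network_{1}\oplus\network_{2}$ to each factor is (by uniqueness of equilibrium loads) the equilibrium of that factor at the same demand, and your parallel step correctly reduces the split to inverting $D(\lambda)=\Lambda_{\network_{1}}^{-1}(\lambda)+\Lambda_{\network_{2}}^{-1}(\lambda)$, with the activation threshold $\lambda>\Lambda_{\network_{i}}(0)$ characterizing which branch carries flow. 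Two points deserve tightening. First, $D$ is strictly increasing only on $[\min_{i}\Lambda_{\network_{i}}(0),\,\cdot\,)$ and is identically zero below that level, so you must restrict to this half-line before inverting; on it, at least one summand is strictly increasing and the other is nondecreasing, which is exactly what you need. Second, when costs remain bounded as loads approach capacity, the range of $D$ need not exhaust $[0,\capac)$, and a \ac{WE} may fail to exist for demands close to capacity (two parallel edges with $\capac_{\edge}=1$, costs $\load$ and $10+\load$, and $\demand=1.9$ give an instance); your construction is valid on the set of demands where equilibria exist, and this caveat is immaterial for the paper's use of the lemma, since \cref{th:learning} assumes $\cost_{\edge}(\load,\state)\to+\infty$ as $\load\to\capac_{\edge}$, which makes each $\Lambda_{\network_{i}}^{-1}$ surjective onto $[0,\capac_{\network_{i}})$. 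In short: the paper's citation buys brevity and defers to a more general statement; your proof buys self-containedness and makes explicit that only continuity, strict monotonicity of costs, and uniqueness of equilibrium loads are needed.
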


The equilibrium edge costs are continuous in the demand \citep[see, e.g.,][Proposition~3.1]{ComDosSca:MPB2021}.
Unboundedness, continuity and monotonicity of the equilibrium edge costs imply 
the following lemma.

\begin{lemma}
\label{le:unbounded-loads}
In a \acl{NRG} played on a SP network, for every $\edge\in\edges$, the equilibrium load map $\eq{\load}_{\edge}$ is unbounded.
\end{lemma}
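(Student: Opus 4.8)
The plan is to show that for each edge $\edge$ the equilibrium load $\eq{\load}_{\edge}(\demand)$ sweeps the whole interval $[0,\capac_{\edge})$ as $\demand$ ranges over $[0,\capac)$; the value $\bar{\load}_{\edge}$ appearing in the proof of \cref{th:learning} then automatically lies in this range, and the intermediate value theorem furnishes a demand realizing it. First I would fix an edge $\edge$ and a state $\state$ and study the scalar map $C_{\edge}(\demand)\coloneqq\cost_{\edge}(\eq{\load}_{\edge}(\demand),\state)$. By \cref{le:increasing-WE-load} the equilibrium load profile can be taken nondecreasing in $\demand$, so $\eq{\load}_{\edge}$ is nondecreasing; composing with the continuous, strictly increasing $\cost_{\edge}(\argdot,\state)$ and invoking continuity of the equilibrium cost in the demand \citep[Proposition~3.1]{ComDosSca:MPB2021}, the map $C_{\edge}$ is continuous and nondecreasing. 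Since $\eq{\load}_{\edge}(0)=0$ and $\eq{\load}_{\edge}$ is continuous and nondecreasing, its range is an interval with infimum $0$; thus it suffices to prove that $\sup_{\demand}\eq{\load}_{\edge}(\demand)=\capac_{\edge}$, which, by the hypothesis $\lim_{\load\to\capac_{\edge}}\cost_{\edge}(\load,\state)=+\infty$, is equivalent to $\sup_{\demand}C_{\edge}(\demand)=+\infty$.

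The heart of the argument is therefore to show that the equilibrium cost on $\edge$ is unbounded as $\demand\uparrow\capac$, and I would prove this by induction on the series–parallel decomposition of \cref{de:series-parallel}. For a single edge the unique feasible flow gives $\eq{\load}_{\edge}(\demand)=\demand\to\capac_{\edge}=\capac$, hence $C_{\edge}(\demand)\to+\infty$. For a parallel composition $\network=\network_{1}\parallel\network_{2}$ one has $\capac=\capac_{1}+\capac_{2}$, and the equilibrium splits the demand as $\demand=\demand_{1}+\demand_{2}$ with $\demand_{i}<\capac_{i}$ the mass routed through $\network_{i}$; as $\demand\uparrow\capac_{1}+\capac_{2}$ the constraints $\demand_{i}<\capac_{i}$ force $\demand_{i}\uparrow\capac_{i}$, and the induction hypothesis applied inside each $\network_{i}$, driven by its own demand $\demand_{i}$, yields the claim for every edge. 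For a series composition $\network=\network_{1}\cdot\network_{2}$ the whole demand $\demand$ traverses both factors and the restriction of the equilibrium to $\network_{i}$ is itself an equilibrium of $\network_{i}$ at demand $\demand$, so one applies the induction hypothesis to each factor at the common demand $\demand$.

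The delicate point, which I expect to be the main obstacle, is precisely the series step when the two factors have unequal capacities: then $\capac=\min(\capac_{1},\capac_{2})$, and the demand driving the factor of larger capacity never exceeds $\capac$, strictly below that factor's capacity, so the induction hypothesis cannot be invoked up to the edge capacity there. This is exactly where the unboundedness hypothesis must be coupled with the demand congesting the network: the robust statement is that every edge on a minimum cut $\cut$ saturates, which follows from the max-flow/min-cut identity $\sum_{\edge\in\cut}\eq{\load}_{\edge}(\demand)=\demand\to\capac=\sum_{\edge\in\cut}\capac_{\edge}$ combined with $\eq{\load}_{\edge}<\capac_{\edge}$, forcing $\eq{\load}_{\edge}(\demand)\to\capac_{\edge}$ for each $\edge\in\cut$. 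In the regime relevant to the learning results—where the capacities are infinite, so $\capac=\infty$ and no bottleneck subnetwork can starve another—the series induction above runs unobstructed and delivers the conclusion for every edge. Combining $\sup_{\demand}C_{\edge}(\demand)=+\infty$ with the continuity and monotonicity of $C_{\edge}$ and the intermediate value theorem, $\eq{\load}_{\edge}$ attains every value in $[0,\capac_{\edge})$, which is the asserted unboundedness of the equilibrium load map.
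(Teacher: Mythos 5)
Your proposal takes a genuinely different route from the paper's. The paper's proof is short and global: it invokes the same monotonicity result (\cref{le:increasing-WE-load}), then argues that unboundedness of the cost functions makes the common equilibrium cost of used routes unbounded in the demand, that for large demand \emph{all} routes are used at this common unbounded cost (this is where the series-parallel hypothesis does its real work, since in a Wheatstone network some routes stay unused at high demand), and hence that route flows, and with them edge loads, are unbounded. You instead run an induction on the series-parallel decomposition of \cref{de:series-parallel}, after reducing load-unboundedness to cost-unboundedness via $\lim_{\load\to\capac_{\edge}}\cost_{\edge}(\load,\state)=+\infty$; your final intermediate-value step (continuity of the equilibrium cost in the demand plus strict monotonicity of $\cost_{\edge}$) is sound and in fact delivers exactly what the proof of \cref{th:learning} consumes, namely that $\eq{\load}_{\edge}$ attains every value in $[0,\capac_{\edge})$.

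However, the induction as written does not close. The parallel step is argued by capacity counting (``$\demand_{i}<\capac_{i}$ forces $\demand_{i}\uparrow\capac_{i}$''), which is vacuous precisely in the infinite-capacity regime to which you retreat: when $\capac_{1}=\capac_{2}=+\infty$, nothing a priori prevents all additional demand from being routed to one side. What closes this step is a Wardrop cost-equalization argument: if the demand through $\network_{1}$ stayed bounded while $\demand\to\infty$, routes through $\network_{1}$ would have bounded cost while used routes through $\network_{2}$ have cost tending to $+\infty$ by the induction hypothesis, contradicting equilibrium. This is exactly the paper's ``all routes used at a common unbounded cost'' argument, localized at the parallel junction, so it cannot be avoided. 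On the other hand, the obstruction you identify in the series step is real and is a defect of the lemma as stated rather than of your strategy: take two edges in series with $\capac_{1}=1$, $\capac_{2}=2$, and costs exploding at the respective capacities; then $\eq{\load}_{2}(\demand)=\demand<1$ for every feasible demand, so the load map of edge $2$ is bounded and bounded away from $\capac_{2}$. The paper's own proof glosses over this (its phrase ``as the demand tends to infinity'' tacitly assumes $\capac=\infty$), so your analysis is sharper on this point; a complete write-up should either restrict the lemma to networks in which every edge has infinite capacity, or restate it as saturation of the edges as the demand approaches the network capacity along minimum cuts, and then note that the proof of \cref{th:learning} additionally requires the distinguishing load $\bar{\load}_{\edge}$ to be reachable in equilibrium.
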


\begin{proof}[Proof of \cref{le:unbounded-loads}]
By \cref{le:increasing-WE-load} and strict monotonicity of the cost functions, the equilibrium load profile $\eq{\load}_{\edge}$ is weakly increasing.   
Since cost functions are unbounded, continuous and monotonic, equilibrium costs are unbounded as the demand tends to infinity. 
Therefore, for large enough demand, all routes are used and have the same equilibrium cost, which is also unbounded. 
It follows that equilibrium flows on routes are unbounded. 
Therefore equilibrium loads on edges are unbounded as well.
\end{proof}

Continuity of $\cost_{\edge}(\argdot,\state)$ for every $\state\in\states$ and \cref{eq:xbar-cost-theta-1-cost-theta-2} imply that $\cost_{\edge}(\load_{\edge},\state_{1}) \neq \cost_{\edge}(\load_{\edge},\state_{2})$ for every $\load_{\edge}$ in some neighborhood $\interval$ of $\bar{\load}_{\edge}$. 
Moreover, there exists a demand interval $\demandint$ such that, for every $\demand^{\per}\in\demandint$,  the expected equilibrium cost of edge $\edge$ is $\cost_{\edge}(\eq{\load}_{\edge},\prior^{\per})$, with $\eq{\load}_{\edge} \in \interval$.
Therefore, when a demand $\demand^{\per}\in\demandint$ occurs, learning is achieved because one of the two costs $\cost_{\edge}(\eq{\load}_{\edge},\state_{1})$ or $\cost_{\edge}(\eq{\load}_{\edge},\state_{2})$ is realized, so that either 
\begin{equation}
\label{eq:prior-t}
\prior^{\per}(\state_{1})=0, \quad\text{or}\quad 
\prior^{\per}(\state_{2})=0.
\end{equation}

\ref{it:th:learning-strong}
The assumption that  $\supp(\Demand)=[0,\capac)$ implies that the event $\Demand^{\per}\in\demandint$ has positive probability.
Therefore, 
\begin{equation}
\label{eq:P-as-D-t}
\prob\parens*{\Demand^{\per}\in\demandint,\text{ for some }\per\in\naturals}=1,
\end{equation}
which concludes the proof.

\ref{it:th:learning-weak}
If $\prior^{\infty}=\dirac_{\State}$, then strong learning is achieved.
This implies that weak learning is achieved.
If $\prior^{\infty}\neq\dirac_{\State}$, then there exist $\state_{1},\state_{2}\in\states$ and $\bar{\per}$ such that 
\begin{equation}
\label{eq:prior-infty-1-2}
0 < \prior^{\per}(\state_{1}) < 1 \quad\text{and}\quad
0 < \prior^{\per}(\state_{2}) < 1, \quad\text{for all }\per\ge\bar{\per}.
\end{equation}
From the previous arguments, for any such $\state_{1},\state_{2}$, if a value $\demand$ of the demand is such that, for $\history^{\per}=\parens*{\incomp{\history}^{\per-1},\demand}$, either 
\begin{equation}
\label{eq:mu-t-d}
\prior^{\per}(\state_{1})=0\quad\text{or}\quad
\prior^{\per}(\state_{2})=0,
\end{equation}
then  $\demand \notin \supp(\Demand)$.
This shows that  there is  no value in the support of $\Demand$ such that an edge with unknown cost is used. Therefore the split of the flow is the same as it would be under perfect information.
\end{proof}

The following lemma will be used in the proof of \cref{th:converse}.

\begin{lemma}[\protect{\citet[Theorem~3.5]{CheDiaHu:TCS2016}}]
\label{le:Chen}
If a network $\network$ is not \ac{SP}, then it contains an $\source$-$\sink$ paradox, i.e., a subgraph $\graph=\route_{1}\cup\widetilde\route_{2}\cup\widetilde\route_{3}$ such that 
\begin{enumerate}[\upshape(i)]
\item
$\route_{1}$ is a path from $\source$ to $\sink$ that meets in this order the vertices $\source,\nodeA,\vertexalt,\vertex,\nodeB,\sink$,

\item
$\widetilde\route_{2}$ is a path from $\nodeA$ to $\vertex$ whose only vertices in common with $\route_{1}$ are  $\nodeA$ and $\vertex$,

\item
$\widetilde\route_{3}$ is a path from $\vertexalt$ to $\nodeB$ whose only vertices in common with $\route_{1}$ are $\vertexalt$ and $\nodeB$,

\item
$\widetilde\route_{2}$ and $\widetilde\route_{3}$ have no common vertices.
\end{enumerate}
\end{lemma}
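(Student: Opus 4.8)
The plan is to prove the logically equivalent contrapositive: a two-terminal network with no $\source$--$\sink$ paradox is \ac{SP}. I would argue by strong induction on the number of edges $\card\edges$, treating $\network$ as acyclic (as is standard for two-terminal series-parallel networks and consistent with \cref{de:series-parallel}). The base case is a single edge, which is \ac{SP}. For the inductive step, it suffices to show that any paradox-free network that is not a single edge admits either a series or a parallel decomposition into strictly smaller subnetworks: since any subgraph of a paradox-free network is again paradox-free (a paradox in a piece is a paradox in the whole), the induction hypothesis makes both pieces \ac{SP}, and then \cref{de:series-parallel}(b)--(c) makes $\network$ \ac{SP}. The entire content is therefore to produce such a decomposition. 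I will also use, in the other direction, the immediate fact that every \ac{SP} network with at least two edges admits a \emph{top-level} split of one of these two kinds.

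First I would dispose of the cases where a separation is visible. If some internal vertex $\nodeC\notin\{\source,\sink\}$ lies on every $\source$--$\sink$ path (a cut vertex), then $\network$ is the series composition of its $\source$-to-$\nodeC$ part with its $\nodeC$-to-$\sink$ part, giving \cref{de:series-parallel}(b); in particular, if the out-degree of $\source$ is $1$ its unique out-neighbour is such a cut vertex, so I may assume the out-degree of $\source$ and the in-degree of $\sink$ are both at least $2$. If instead the edge set partitions into two nonempty $\source$--$\sink$ subnetworks sharing only $\source$ and $\sink$, then $\network$ is their parallel composition, giving \cref{de:series-parallel}(c). It remains to treat the hard case: $\network$ has no internal cut vertex and is not parallel-decomposable. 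The goal is to show such a network (with at least two edges) must contain a paradox, contradicting our standing hypothesis and ruling this case out.

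For the crux, fix a simple $\source$--$\sink$ path $\route_{1}=(\source=w_{0},w_{1},\dots,w_{\ell}=\sink)$. Call an \emph{ear} a path whose interior avoids $\route_{1}$ and whose endpoints are vertices $w_{p},w_{q}$ with $p<q$ (acyclicity fixes the orientation), and assign to it the interval $[p,q]$. I would first prove the key combinatorial lemma: \emph{if the intervals of all ears are pairwise laminar (nested or disjoint), then $\network$ is \ac{SP}}. This follows by a secondary induction that repeatedly collapses an inclusion-minimal ear together with the $\route_{1}$-segment it spans; by minimality and laminarity no ear touches that segment's interior, so its interior vertices have in-degree $1$ and out-degree $1$, the segment is a single series chain parallel to the ear, and collapsing this parallel-over-series block preserves laminarity while reducing size. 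Now, in the hard case, if the ears were laminar the lemma would make $\network$ \ac{SP}, whence a top-level series or parallel split, forcing either an internal cut vertex or a parallel decomposition --- both excluded. Hence two ears \emph{cross}: their intervals satisfy $p<p'<q<q'$. Putting $\nodeA=w_{p}$, $\vertexalt=w_{p'}$, $\vertex=w_{q}$, $\nodeB=w_{q'}$, the path $\route_{1}$ visits $\source,\nodeA,\vertexalt,\vertex,\nodeB,\sink$ in order, the first ear runs $\nodeA\leadsto\vertex$ (this is $\widetilde\route_{2}$) and the second $\vertexalt\leadsto\nodeB$ (this is $\widetilde\route_{3}$), each meeting $\route_{1}$ only at its endpoints; this gives conditions (i)--(iii), and the resulting subgraph is precisely a subdivision of the Wheatstone network of \cref{fi:Wheatstone}.

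The remaining and most delicate point --- which I expect to be the main obstacle --- is condition (iv), that the two ears be vertex-disjoint, since a naive crossing pair may share interior vertices and re-pairing their pieces only uncrosses them into nested detours. This is exactly where the hard-case hypothesis ``no internal cut vertex'' is used. Were every crossing pair to share an interior vertex, all detours between the left terminals $\{\nodeA,\vertexalt\}$ and the right terminals $\{\vertex,\nodeB\}$ would funnel through one common vertex, which would then separate $\source$ from $\sink$, contradicting the absence of a cut vertex. Formally, I would apply Menger's theorem to the auxiliary graph formed by the ears with $\route_{1}$ deleted to extract two internally vertex-disjoint detours realizing $\nodeA\leadsto\vertex$ and $\vertexalt\leadsto\nodeB$; the lack of a single separating vertex guarantees two disjoint routings rather than one bottleneck. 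This yields condition (iv) and a genuine paradox, contradicting paradox-freeness, so the hard case cannot occur. Thus $\network$ always admits a series or parallel split, and the induction closes.
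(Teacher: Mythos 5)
First, a point of reference: the paper does not prove this lemma at all --- it imports it verbatim from \citet[Theorem~3.5]{CheDiaHu:TCS2016} --- so your attempt is measured against the cited literature rather than against any in-paper argument. Measured that way, it has genuine gaps at both of its load-bearing steps.

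The key laminarity lemma is false as stated. Take $\network$ to be a single edge from $\source$ to $\sink$ in parallel with a copy of the Wheatstone network of \cref{fi:Wheatstone}, and choose $\route_{1}$ to be that single edge: every ear is then a full $\source$--$\sink$ path through the Wheatstone part, all ears carry the same interval, the family is trivially laminar, yet $\network$ is not \ac{SP}. The root cause is that laminarity constrains only \emph{endpoint} intervals: ears with equal or nested intervals may freely share interior vertices, so the interior vertices of your inclusion-minimal ear can have further incident edges, and the ``parallel-over-series'' collapse is not legitimate (your secondary induction silently assumes ears are pairwise internally disjoint). Nor can you dodge this by observing that my example is parallel-decomposable and hence outside the hard case: subdivide the parallel edge into $\source\to m\to\sink$ and add an edge from the upper Wheatstone vertex to $m$; the resulting network has no internal cut vertex and no parallel split, all ears relative to $\route_{1}=(\source,m,\sink)$ have intervals $[0,1]$ or $[0,2]$ and are therefore laminar, yet it contains a paradox hidden inside the Wheatstone part. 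So in the hard case laminarity does \emph{not} force a crossing pair relative to a fixed $\route_{1}$; non-\ac{SP} structure can sit entirely inside nested ears, and the induction would have to recurse into the subnetworks spanned between attachment vertices --- exactly the work the collapsing step was meant to avoid.

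The treatment of condition (iv) also fails, for three stacked reasons. From ``every crossing pair of ears shares an interior vertex'' you cannot conclude that all detours funnel through one common vertex (pairwise intersection does not yield a common element); even granting such a vertex, it would not separate $\source$ from $\sink$, because it lies off $\route_{1}$ and $\route_{1}$ is itself an $\source$--$\sink$ path avoiding every ear interior, so the ``no internal cut vertex'' hypothesis can never be contradicted this way; and Menger's theorem gives two internally disjoint paths from $\braces{\nodeA,\vertexalt}$ to $\braces{\vertex,\nodeB}$ with no control over the pairing --- the disjoint pair may link $\nodeA\leadsto\nodeB$ and $\vertexalt\leadsto\vertex$, producing nested ears and no paradox. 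Prescribing the pairing $\nodeA\leadsto\vertex$, $\vertexalt\leadsto\nodeB$ is a $2$-linkage problem, which is not a Menger consequence; proofs of this result obtain disjointness by re-selecting the four attachment vertices through an extremal choice, not by re-routing between four fixed ones. A smaller scope issue: you assume $\network$ acyclic, but the paper's networks (oriented multigraphs in which every edge lies on some $\source$--$\sink$ path) may contain directed cycles, and \cref{th:converse} invokes the lemma for arbitrary non-\ac{SP} networks, so the cyclic case would still need a separate reduction.
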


\begin{figure}[h]
\newlength{\unit}
\setlength{\unit}{.65cm}
\centering
\begin{tikzpicture}[thick]
	\tikzset{SourceNode/.style = {draw, circle, fill=green!20}}
	\tikzset{AnNode/.style = {draw, rectangle, fill=yellow!20}}
	\tikzset{MidNode/.style = {draw, circle, fill=blue!10}}
	\tikzset{DestNode/.style = {draw, circle, fill=green!20}}
	\node[SourceNode](orig) at (-14\unit,0) {$\scriptstyle{\source}$};
    \node[AnNode](nn1) at (-12\unit,0) {};
    \node[MidNode](a) at (-10\unit,0) {$\scriptstyle{\nodeA}$};
    \node[MidNode](c) at (-8\unit,0) {$\scriptstyle{}$};
    \node[AnNode](nn2) at (-6\unit,0) {};
    \node[MidNode](u) at (-4\unit,0) {$\scriptstyle{\vertexalt}$};
    \node[AnNode](up) at (-4\unit,3\unit) {};
    \node[MidNode](d) at (-2\unit,0) {$\scriptstyle{}$};
	\node[AnNode](nn3) at (0\unit,0) {};
    \node[MidNode](v) at (2\unit,0) {$\scriptstyle{\vertex}$};
    \node[AnNode](down) at (2\unit,-3\unit) {};
    \node[AnNode](nn4) at (4\unit,0) {};
    \node[MidNode](e) at (6\unit,0) {$\scriptstyle{}$};
    \node[MidNode](b) at (8\unit,0) {$\scriptstyle{\nodeB}$};
    \node[AnNode](nn5) at (10\unit,0) {};
	\node[DestNode](dest) at (12\unit,0){$\scriptstyle{\sink}$};
    \draw[->](orig) to (nn1);
    \draw[->](nn1) to (a);
    \draw[->](a) to node[below]{$\edge_{3}$} (c);
    \draw[->](c) to (nn2);
    \draw[dashed,->](a) to  [bend left]  (up);
    \draw[->](nn2) to (u);
    \draw[dashed,->](up) to  [bend left] (v);
    \draw[->](u) to node[above]{$\edge_{1}$} (d);
    \draw[->](d) to   (nn3);
    \draw[dotted,->](u) to [bend right] (down);
    \draw[->](nn3) to (v);
    \draw[->](v) to (nn4);
    \draw[dotted,->](down) to [bend right] (b);
    \draw[->](nn4) to  (e);
    \draw[->](e) to node[above]{$\edge_{2}$}  (b);
    \draw[->](b) to (nn5);
    \draw[->](nn5) to (dest);
		
	\node(L1) at (6\unit,4\unit) {$\route_{1} \colon \source$};		
	\node(L2) at (6\unit,3\unit) {$\compl{\route}_{2} \colon \nodeA$};		
	\node(L3) at (6\unit,2\unit) {$\compl{\route}_{3} \colon \vertexalt$};		
	\node(E1) at (9\unit,4\unit) {$\sink$};		
	\node(E2) at (9\unit,3\unit) {$\vertex$};		
	\node(E3) at (9\unit,2\unit) {$\nodeB$};		
    \draw[->](L1) to (E1);
    \draw[dashed,->](L2) to (E2);
    \draw[dotted,->](L3) to (E3);

\end{tikzpicture}
\caption{\label{fi:paradox} $\source$-$\sink$ paradox. The yellow squares represent finite sequences of nodes connected in series.}
\end{figure}
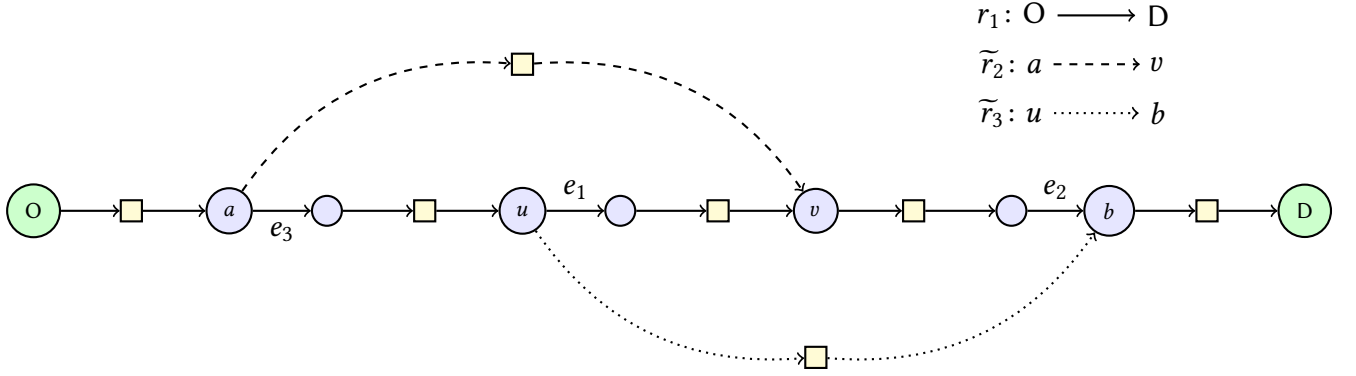

\begin{proof}[Proof of \cref{th:converse}]
Let $\network$ be a network that is not \ac{SP}.
Then, by \cref{le:Chen}, it contains an $\source$-$\sink$ paradox $\network'$ as in \cref{fi:paradox}. 
The idea of the construction is to assign capacities and cost functions to edges in such a way that, the cost function  $\cost_{5}$ is never learned, whatever the feasible demand.

Call $\route_{2}$ the path that coincides with $\route_{1}$ from $\source$ to $\nodeA$, with $\compl{\route_{2}}$ from $\nodeA$ to $\vertex$, and with $\route_{1}$ from $\vertex$ to $\sink$.
Call $\route_{3}$ the path that coincides with $\route_{1}$ from $\source$ to $\vertexalt$, with $\compl{\route_{3}}$ from $\vertexalt$ to $\nodeB$, and with $\route_{1}$ from $\nodeB$ to $\sink$.

Let 
\begin{align*}
\noedges_{\source\nodeA}&= \text{number of edges on $\route_{1}$ between $\source$ and $\nodeA$,}\\
\noedges_{\nodeA\vertexalt}&= \text{number of edges on $\route_{1}$ between $\nodeA$ and $\vertexalt$,}\\
\noedges_{\vertexalt\vertex}&= \text{number of edges on $\route_{1}$ between $\vertexalt$ and $\vertex$,}\\
\noedges_{\vertex\nodeB}&= \text{number of edges on $\route_{1}$ between $\vertex$ and $\nodeB$,}\\
\noedges_{\nodeB\sink}&= \text{number of edges on $\route_{1}$ between  $\nodeB$ and $\sink$,}\\
\noedges_{2}&= \text{number of edges on $\compl{\route}_{2}$,}\\
\noedges_{3}&= \text{number of edges on $\compl{\route}_{3}$.}
\end{align*}

Let $\bigconst$ and $\consteps$ be two positive real numbers such that
\begin{equation}
\label{eq:epsilon}
\consteps < \frac{1}{3} \quad\text{and}\quad 3 < \bigconst.
\end{equation}

All the edges that appear in \cref{fi:paradox} have infinite capacity.
Let $\states=\braces*{\state_{\good},\state_{\bad}}$ with $\prior(\state_{\good})=\prior(\state_{\bad})=1/2$ and let the costs on the edges of the network be as follows:
\begin{align}
\label{eq:cost-paradox-2}
\cost_{2}(\load,\state)&=\parens*{\bigconst+\frac{\consteps}{\noedges_{\vertex\nodeB}}}\load, \quad\text{ for all }\state\in\states\\
\label{eq:cost-paradox-3}
\cost_{3}(\load,\state)&=\parens*{\bigconst+\frac{\consteps}{\noedges_{\nodeA\vertexalt}}}\load, \quad\text{ for all }\state\in\states\\
\cost_{1}(\load,\state)&=
\begin{cases}
\parens*{\bigconst+\dfrac{\consteps}{\noedges_{\vertexalt\vertex}}}\load &\text{ for } \load \le 1, \quad\text{ for all }\state\in\states\\
\parens*{\bigconst+\dfrac{\consteps}{\noedges_{\vertexalt\vertex}}}+\consteps^{2}\parens*{\load-1} &\text{ for }\load >1 \text{ and } \state=\state_{\good},\\
\parens*{\bigconst+\dfrac{\consteps}{\noedges_{\vertexalt\vertex}}}+\parens*{2\bigconst+\dfrac{2\consteps}{\noedges_{\vertexalt\vertex}}-\consteps^{2}}\parens*{\load-1} &\text{ for }\load >1 \text{ and } \state=\state_{\bad}.
\end{cases}
\end{align}
For every other edge $\edge$ appearing in \cref{fi:paradox} the cost functions are as follows:
\begin{equation}
\label{eq:edges-in-figure}
\begin{split}
\cost_{\edge}(\load,\state)&=\frac{\consteps}{\noedges_{\source\nodeA}}\load, \quad\text{ for all }\state\in\states, \quad\text{if $\edge$ is between $\source$ and $\nodeA$},\\
\cost_{\edge}(\load,\state)&=\frac{\consteps}{\noedges_{\nodeA\vertexalt}}\load, \quad\text{ for all }\state\in\states, \quad\text{if $\edge$ is between $\nodeA$ and $\vertexalt$},\\
\cost_{\edge}(\load,\state)&=\frac{\consteps}{\noedges_{\vertexalt\vertex}}\load, \quad\text{ for all }\state\in\states, \quad\text{if $\edge$ is between $\vertexalt$ and $\vertex$},\\
\cost_{\edge}(\load,\state)&=\frac{\consteps}{\noedges_{\vertex\nodeB}}\load, \quad\text{ for all }\state\in\states, \quad\text{if $\edge$ is between $\vertex$ and $\nodeB$},\\
\cost_{\edge}(\load,\state)&=\frac{\consteps}{\noedges_{\nodeB\sink}}\load, \quad\text{ for all }\state\in\states, \quad\text{if $\edge$ is between $\nodeB$ and $\sink$},\\
\cost_{\edge}(\load,\state)&=\frac{\consteps}{\noedges_{2}}\load, \quad\text{ for all }\state\in\states, \quad\text{if $\edge$ is on $\compl{\route}_{2}$},\\
\cost_{\edge}(\load,\state)&=\frac{\consteps}{\noedges_{3}}\load, \quad\text{ for all }\state\in\states, \quad\text{if $\edge$ is on $\compl{\route}_{3}.$}
\end{split}
\end{equation}
All the edges $\edge$ that do not appear in \cref{fi:paradox} have a capacity 
\begin{equation}
\label{eq:capacity-outside}
\capac_{\edge} = \frac{\capkappa}{\abs{\routes}},
\end{equation}
where $\abs{\routes}$ is the cardinality of $\routes$ and 
\begin{equation}
\label{eq:kappa}
\capkappa<\frac{1}{2}.
\end{equation}
Moreover, for these edges
\begin{equation}
\label{eq:cost-outside-figure}
\cost_{\edge}(\load_{\edge}) = \frac{1}{\capac_{\edge}-\load_{\edge}}, \quad\text{for  }\load_{\edge}\in[0,\capac_{\edge}).
\end{equation}

\cref{eq:capacity-outside} implies that the  load on edge $\edge_{1}$ coming from flows of paths different from $\route_{1}$ is smaller than $1$.

We   prove now that, in equilibrium, the total load on edge $\edge_{1}$ is smaller than $1$. 
Let $\flowprof$ be a feasible flow vector and let $\flow_{1},\flow_{2},\flow_{3}$ be the corresponding flows on $\route_{1},\route_{2},\route_{3}$, respectively.
The expected costs given $\prior$ satisfy the following inequalities
\begin{equation}
\label{eq:expected_costs}
\begin{split}
\cost_{\route_1}(\flowprof,\prior)
& \ge \consteps(\flow_{1}+\flow_{2}+\flow_{3})
+ (\consteps+\bigconst)(\flow_{1}+\flow_{3})
+ (\consteps+\bigconst)\flow_{1}
+ (\consteps+\bigconst)(\flow_{1}+\flow_{2})
\\
\cost_{\route_{2}}(\flowprof,\prior)
& \le \consteps(\flow_{1}+\flow_{2}+\flow_{3}+\capkappa)
+\consteps(\flow_{2}+1)
+ (\consteps+\bigconst)(\flow_{1}+\flow_{2}+\capkappa) 
\\
\cost_{\route_{3}}(\flowprof,\prior)
& \le \consteps(\flow_{1}+\flow_{2}+\flow_{3}+\capkappa)
+ (\consteps+\bigconst)(\flow_{1}+\flow_{3}+\capkappa)
+ \consteps(\flow_{3}+\capkappa)
\end{split}
\end{equation}

The path $\route_1$ has a positive flow in equilibrium if and only if 
\begin{align}
\label{eq:condit-1-2}
\cost_{\route_1}(\flowprof,\prior) 
&\le \cost_{\route_{2}}(\flowprof,\prior)
\intertext{and}
\label{eq:condit-1-3}
\cost_{\route_1}(\flowprof,\prior) 
&\le \cost_{\route_{3}}(\flowprof,\prior).
\end{align}
The inequalities in \cref{eq:expected_costs,eq:condit-1-2} imply
\begin{equation}
\label{eq:condit-alt-2}
(\consteps+\bigconst)(2\flow_{1}+\flow_{3})
\le \consteps(\flow_{2}+\capkappa) + (2\consteps+\bigconst)\capkappa.
\end{equation}
Similarly, from  \cref{eq:expected_costs,eq:condit-1-3}, we obtain
\begin{equation}
\label{eq:condit-alt-3}
(\consteps+\bigconst)(2\flow_{1}+\flow_{2})
\le \consteps(\flow_{3}+\capkappa) + (2\consteps+\bigconst)\capkappa.
\end{equation}
Summing \cref{eq:condit-alt-2,eq:condit-alt-3}, we obtain
\begin{equation}
\label{eq:cond-alt-4}
(\consteps+\bigconst)(4\flow_{1}+\flow_{3}+\flow_{2})
\le
(2\bigconst + 6\consteps)\capkappa + \consteps(\flow_{2}+\flow_{3}),
\end{equation}
that is,
\begin{equation}
\label{eq:cond-alt-}
\flow_{1}
\le
\frac{(2\bigconst + 6\consteps)\capkappa - \bigconst (\flow_{3}+\flow_{2})}{4(\consteps+\bigconst)}
\le
\frac{(\bigconst+3\consteps)\capkappa}{2(\consteps+\bigconst)}
\le
\frac{(\bigconst+3\consteps)\capkappa}{2\bigconst}
\le \capkappa.
\end{equation}
Therefore, because of \cref{eq:kappa}, the load on $\edge_{1}$ is at most $\capkappa+\capkappa\le 1$.
This implies that the cost function $\cost_{1}$ is not learned,  for any value of the demand.
On the other hand, if the true state were known to be $\state_{\good}$, the equilibrium flow would be different than the one under uncertainty. 
This shows that weak learning is not achieved. 
\end{proof}

\section{Concluding remarks}\label{se:signals}

\subsection{Random demand vs noisy costs}

In our model, the demand is random and realized costs are observed with certainty.  \citet{WuAmi:IFAC2019} also considered dynamic nonatomic routing games with unknown states where realized costs are observed. 
Unlike what we do in our paper, they assumed constant demand and noisy costs, that is, in their model realized costs depend on  the unknown state and on  multivariate normally distributed noises. The following  example shows that these different sources of randomness lead to different learning outcomes.

Consider the network in \cref{fi:parallel-network} with infinite capacities and
\begin{equation}
\label{eq:costs-Pigou-3}
\cost_{1}(\load,\state)=\load,\qquad
\cost_{2}(\load,\state)=
\begin{cases}
\load &\text{for }\state=\state_{\good},\\
\load + \consta &\text{for }\state=\state_{\bad},
\end{cases}
\end{equation}
with $\consta>0$ and  $\prior(\state_{\bad})=\prior(\state_{\good})=1/2$.
As shown before, the expected cost of edge $\edge_{2}$ is 
\begin{equation}
\label{eq:expected-cost-e2+}
\cost_{2}(\load,\prior) = \load + \frac{1}{2}\consta.
\end{equation}
Assume that $\Demand^{\per}$ has an exponential distribution with mean $\consta/4$. 
As $\supp(\Demand^{\per})=\reals_{+}$, we have
\begin{equation}
\label{eq:PDt>a-as}
\prob\parens*{\Demand^{\per}>\consta/2,\text{ for some }\per\ge 1}=1,
\end{equation}
which implies that, almost surely, edge $\edge_{2}$ is used at some point and, in our model, strong learning occurs. 

Consider now the same network game with the information model of \citet{WuAmi:IFAC2019} with the demand $\demand^{\per}=\consta/4$, i.e.,  equal to the expected demand of our model. 
Let the observed costs at period $\per$ be realizations of the random variables
\begin{equation}
\label{eq:noisy-costs}
\noisy{\cost}_{\edge}^{\per}(\load) \coloneqq\cost_{\edge}(\load)+\consteps^{\per},
\end{equation}
for $\edge=1,2$, with $\consteps^{\per}$ normally distributed with mean $0$ and  variance $\sigma^{2}$.
No matter  the realization of the random cost, at any period $\per$, the expected cost of edge $1$ is lower than the expected cost of edge $2$, so edge $2$ is never used and  weak learning fails.
In this example, there is learning only when high demand forces exploration of the edge with unknown cost. This shows that, despite their similarities, from the perspective of learning, the model with random demand and the one with noisy costs are  different.

\subsection{Future work}

Although the previous example shows that  models with noisy observation of realized costs and  models with variable demand yield different properties, an interesting open question is how these two sources of randomness interact when combined. One particular case of interest is a model where the variance of the noise on a given edge is proportional to either its equilibrium load or the total demand. 

Another direction  of extension  is to provide bounds on the speed of learning   for specific classes of cost functions. Little can be said on convergence speed without restrictions on cost functions. Indeed, two functions may  differ  by an  arbitrarily small margin, or on a set of  arbitrarily small probability. In line with the literature, restricting to a specific class of costs  may yield computable bounds.

\section{List of symbols}
\begin{longtable}{p{.13\textwidth} p{.82\textwidth}}

$\nodeA$ & vertex\\
$\nodeB$ & vertex\\
$\Bor$ & Borel $\sigma$-field\\
$\cost_{\edge}$ & cost of edge $\edge$\\
$\cost_{\edge}(\load,\prior)$ & expected cost of edge $\edge$ when the prior is $\prior$, defined in \cref{eq:cost-prior}\\
$\cost_{\route}$ & cost of path $\route$\\
$\cost_{\route}(\flowprof,\prior)$ & expected cost of path $\route$ when the prior is $\prior$, defined in \cref{eq:cost-prior}\\
$\costprof(\loadprof)$ & cost vector induced by the load vector $\loadprof$\\
$\cut$ & cut\\
$\demand$ & traffic demand\\
$\Demand^{\per}$ & random traffic demand at period $\per$\\
$\sink$ & destination\\
$\edge$ & edge\\
$\edges$ & set of edges\\
$\distr$ & demand distribution\\
$\nagame$ & nonatomic routing game \\
$\nagame_{\prior}$ & nonatomic routing game with unknown  states\\
$\history^{\per}$ & history at period $\per$\\
$\network$ & oriented multigraph\\
$\source$ & origin\\
$\prob$ & product measure $\prior\otimes\distr^{\infty}$\\
$\route$ & feasible path\\
$\routes$ & set of feasible paths from $\source$ to $\sink$\\
$\supp$ & support of a random variable\\
$\states$ & state space\\
$\vertexalt$ & vertex\\
$\vertex$ & vertex\\
$\vertices$ & set of vertices\\
$\load_{\edge}$ & load of edge $\edge$\\
$\loadprof$ & load vector\\
$\eq{\loadprof}$ & equilibrium load vector\\
$\flow_{\route}$ & flow of path $\route$\\
$\flowprof$ & flow vector\\
$\eq{\flowprof}$ & equilibrium flow vector\\
$\flows$ & set of feasible flows\\

$\convalpha$ & scalar in $[0,1]$\\
$\capac_{\edge}$ & capacity of edge $\edge$\\
$\capac_{\cut}$ & capacity of cut $\cut$\\
$\capac$ & capacity of the network\\
$\game$ & dynamic nonatomic routing game with unknown states\\
$\dirac_{\state}$ & Dirac measure on $\state$\\
$\simplex(\states)$ & simplex of probability measures on $\states$\\
$\State$ & random state of nature\\
$\state$ & possible value of state of nature\\
$\prior$ & prior distribution on $\states$\\
$\prior^{\per}$ & posterior distribution\\
$\prior^{\infty}$ & almost sure limit of $\prior^{\per}$\\

\end{longtable}

\subsection*{Acknowledgments}

Marco Scarsini is a member of GNAMPA-INdAM. 
His research received partial support from the COST action GAMENET, the INdAM-GNAMPA Project 2020 ``Random walks on random games,'' and  the Italian MIUR PRIN 2017 Project ALGADIMAR ``Algorithms, Games, and Digital Markets.'' 
Tristan Tomala gratefully acknowledges the support of the HEC Foundation and ANR/Investissements d'Avenir under grant ANR-11-IDEX-0003/Labex Ecodec/ANR-11-LABX-0047.
Emilien Macault gratefully acknowledges the support of HEC Foundation and the COST action GAMENET.

{
\hypersetup{linkcolor=red}
\bibliographystyle{apalike}
\bibliography{bibbrg}
}

\newpage

\end{document}